\newtheorem{definition}{Definition}
\newtheorem{proposition}[definition]{Proposition}
\newtheorem{lemma}[definition]{Lemma}
\newtheorem{theorem}[definition]{Theorem}
\newtheorem{corollary}[definition]{Corollary}
\newtheorem{conjecture}[definition]{Conjecture}
\newtheorem{remark}[definition]{Remark}
\newtheorem{example}[definition]{Example}
\newtheorem{question}[definition]{Question}
\newtheorem{memo}[definition]{Memo}
\def\squareforqed{\hbox{\rlap{$\sqcap$}$\sqcup$}}
\def\qed{\ifmmode\squareforqed\else{\unskip\nobreak\hfil
		\penalty50\hskip1em\null\nobreak\hfil\squareforqed
		\parfillskip=0pt\finalhyphendemerits=0\endgraf}\fi}
\def\endenv{\ifmmode\;\else{\unskip\nobreak\hfil
		\penalty50\hskip1em\null\nobreak\hfil\;
		\parfillskip=0pt\finalhyphendemerits=0\endgraf}\fi}
\newenvironment{proof}{\noindent \textbf{{Proof.~} }}{\qed}
\def\Dbar{\leavevmode\lower.6ex\hbox to 0pt
	{\hskip-.23ex\accent"16\hss}D}
\def\url@leostyle{%
	\@ifundefined{selectfont}{\def\UrlFont{\sf}}{\def\UrlFont{\small\ttfamily}}}
\def\bcj{\begin{conjecture}}
	\def\ecj{\end{conjecture}}
\def\bcr{\begin{corollary}}
	\def\ecr{\end{corollary}}
\def\bd{\begin{definition}}
	\def\ed{\end{definition}}
\def\bea{\begin{eqnarray}}
	\def\eea{\end{eqnarray}}
\def\bem{\begin{enumerate}}
	\def\eem{\end{enumerate}}
\def\bex{\begin{example}}
	\def\eex{\end{example}}
\def\bim{\begin{itemize}}
	\def\eim{\end{itemize}}
\def\bl{\begin{lemma}}
	\def\el{\end{lemma}}
\def\bma{\begin{bmatrix}}
	\def\ema{\end{bmatrix}}
\def\bpf{\begin{proof}}
	\def\epf{\end{proof}}
\def\bpp{\begin{proposition}}
	\def\epp{\end{proposition}}
\def\bqu{\begin{question}}
	\def\equ{\end{question}}
\def\br{\begin{remark}}
	\def\er{\end{remark}}
\def\bt{\begin{theorem}}
	\def\et{\end{theorem}}
\def\bmm{\begin{memo}}
	\def\emm{\end{memo}}
\def\btb{\begin{tabular}}
	\def\etb{\end{tabular}}
	\newcommand{\nc}{\newcommand}
	\nc{\bbA}{\mathbb{A}} \nc{\bbB}{\mathbb{B}} \nc{\bbC}{\mathbb{C}}
	\nc{\bbD}{\mathbb{D}} \nc{\bbE}{\mathbb{E}} \nc{\bbF}{\mathbb{F}}
	\nc{\bbG}{\mathbb{G}} \nc{\bbH}{\mathbb{H}} \nc{\bbI}{\mathbb{I}}
	\nc{\bbJ}{\mathbb{J}} \nc{\bbK}{\mathbb{K}} \nc{\bbL}{\mathbb{L}}
	\nc{\bbM}{\mathbb{M}} \nc{\bbN}{\mathbb{N}} \nc{\bbO}{\mathbb{O}}
	\nc{\bbP}{\mathbb{P}} \nc{\bbQ}{\mathbb{Q}} \nc{\bbR}{\mathbb{R}}
	\nc{\bbS}{\mathbb{S}} \nc{\bbT}{\mathbb{T}} \nc{\bbU}{\mathbb{U}}
	\nc{\bbV}{\mathbb{V}} \nc{\bbW}{\mathbb{W}} \nc{\bbX}{\mathbb{X}}
	\nc{\bbZ}{\mathbb{Z}}
	\nc{\bA}{{\bf A}} \nc{\bB}{{\bf B}} \nc{\bC}{{\bf C}}
	\nc{\bD}{{\bf D}} \nc{\bE}{{\bf E}} \nc{\bF}{{\bf F}}
	\nc{\bG}{{\bf G}} \nc{\bH}{{\bf H}} \nc{\bI}{{\bf I}}
	\nc{\bJ}{{\bf J}} \nc{\bK}{{\bf K}} \nc{\bL}{{\bf L}}
	\nc{\bM}{{\bf M}} \nc{\bN}{{\bf N}} \nc{\bO}{{\bf O}}
	\nc{\bP}{{\bf P}} \nc{\bQ}{{\bf Q}} \nc{\bR}{{\bf R}}
	\nc{\bS}{{\bf S}} \nc{\bT}{{\bf T}} \nc{\bU}{{\bf U}}
	\nc{\bV}{{\bf V}} \nc{\bW}{{\bf W}} \nc{\bX}{{\bf X}}
	\nc{\bZ}{{\bf Z}}
	\nc{\as}{{\cal AS}}
	\nc{\app}{{\cal AP}}
	\nc{\ar}{{\cal AR}}
	\nc{\bp}{{\cal BP}}
	\nc{\dbp}{{\cal DBP}}
\nc{\ew}{{\cal EW}}
\nc{\dew}{{\cal DEW}}
\nc{\ndew}{{\cal NDEW}}
\nc{\conv}{{\text{Conv}}}
	\nc{\cA}{{\cal A}} \nc{\cB}{{\cal B}} \nc{\cC}{{\cal C}}
	\nc{\cD}{{\cal D}} \nc{\cE}{{\cal E}} \nc{\cF}{{\cal F}}
	\nc{\cG}{{\cal G}} \nc{\cH}{{\cal H}} \nc{\cI}{{\cal I}}
	\nc{\cJ}{{\cal J}} \nc{\cK}{{\cal K}} \nc{\cL}{{\cal L}}
	\nc{\cM}{{\cal M}} \nc{\cN}{{\cal N}} \nc{\cO}{{\cal O}}
	\nc{\cP}{{\cal P}} \nc{\cQ}{{\cal Q}} \nc{\cR}{{\cal R}}
	\nc{\cS}{{\cal S}} \nc{\cT}{{\cal T}} \nc{\cU}{{\cal U}}
	\nc{\cV}{{\cal V}} \nc{\cW}{{\cal W}} \nc{\cX}{{\cal X}}
	\nc{\cZ}{{\cal Z}}
	\nc{\cpp}{{\cal PP}}
	\nc{\hA}{{\hat{A}}} \nc{\hB}{{\hat{B}}} \nc{\hC}{{\hat{C}}}
	\nc{\hD}{{\hat{D}}} \nc{\hE}{{\hat{E}}} \nc{\hF}{{\hat{F}}}
	\nc{\hG}{{\hat{G}}} \nc{\hH}{{\hat{H}}} \nc{\hI}{{\hat{I}}}
	\nc{\hJ}{{\hat{J}}} \nc{\hK}{{\hat{K}}} \nc{\hL}{{\hat{L}}}
	\nc{\hM}{{\hat{M}}} \nc{\hN}{{\hat{N}}} \nc{\hO}{{\hat{O}}}
	\nc{\hP}{{\hat{P}}} \nc{\hR}{{\hat{R}}} \nc{\hS}{{\hat{S}}}
	\nc{\hT}{{\hat{T}}} \nc{\hU}{{\hat{U}}} \nc{\hV}{{\hat{V}}}
	\nc{\hW}{{\hat{W}}} \nc{\hX}{{\hat{X}}} \nc{\hZ}{{\hat{Z}}}
	\nc{\hn}{{\hat{n}}}
	\def\Dbar{\leavevmode\lower.6ex\hbox to 0pt
		{\hskip-.23ex\accent"16\hss}D}
\begin{document}

\title{Commutators with multiple unitary symmetry}

\author{Shu Li}\email{leo7090@buaa.edu.cn}
\affiliation{LMIB(Beihang University), Ministry of education, and School of Mathematical Sciences, Beihang University, Beijing 100191, China}
\author{Jie Wang}\email{22377088@buaa.edu.cn}
\affiliation{LMIB(Beihang University), Ministry of education, and School of Mathematical Sciences, Beihang University, Beijing 100191, China}
\author{Binfeng Wang}\email[]{22377078@buaa.edu.cn}
\affiliation{LMIB(Beihang University), Ministry of education, and School of Mathematical Sciences, Beihang University, Beijing 100191, China}
\author{Lin Chen}\email[]{linchen@buaa.edu.cn(corresponding author)}
\affiliation{LMIB(Beihang University), Ministry of education, and School of Mathematical Sciences, Beihang University, Beijing 100191, China}

\date{\today}

\begin{abstract}
Commutators are essential in quantum information theory, influencing quantum state symmetries and information storage robustness. This paper systematically investigates the characteristics of bipartite and multipartite quantum states invariant under local unitary group actions. The results demonstrate that 
 any quantum states commuting with $U \otimes U^{\dagger}$ and $U \otimes V$ can be expressed as $\frac{1}{n}I_n$, where $U$ and $V$ are arbitary $n\times n$ unitary matrices. Furthermore, in tripartite systems, any quantum states commuting with $U \otimes U \otimes U^{\dagger}$ must necessarily adopt the form:
$W = xI_{n^3} + y\left(\sum_{i,j=1}^n (|i\rangle \langle j|) \otimes (|j\rangle \langle i|)\right) \otimes I_n$,
where $F_n$ represents the canonical swap operator. These results provide theoretical tools for characterizing multipartite entanglement constraints and designing symmetry-protected quantum protocols.
\end{abstract}
\maketitle
Keywords: commutators, unitary symmetry, multi-body, quantum state

\section{Introduction}

In the fields of quantum information theory and linear algebra, the study of symmetry \cite{marvian2012symmetry,zeng2019quantum} has always been a core topic, especially concerning the tensor products of unitary matrices \cite{ryan2002introduction,keyl2002fundamentals} and their transformation properties. These issues are of significant importance in the analysis of quantum state symmetry, the identification of quantum entanglement \cite{horodecki2009quantum,bengtsson2017geometry}, and the study of invariance in quantum systems \cite{ticozzi2008quantum}, attracting the attention of numerous scholars.

Werner and Reinhard \cite{werner1989quantum} first introduced quantum states with $U \otimes U$ symmetry, focusing on characterizing the set of positive semi-definite matrices $W$ commuting with $U \otimes U$. They proved that such matrices can be simplified as $W = xI_{n^2} + yF_n$, where $F_n = \sum_{i,j=1}^n (|i\rangle\langle j|) \otimes (|j\rangle\langle i|)$, establishing a foundational framework for subsequent studies.

Subsequent works extended these investigations: Eggeling and Werner \cite{eggeling2001separability} analyzed the separability of tripartite $U^{\otimes 3}$-symmetric states, revealing the geometric structures of fully separable, biseparable, and positive partial transpose (PPT) states. They demonstrated that PPT conditions become less restrictive in higher dimensions and identified a nontrivial separation between tripartite entanglement and biseparable states. Johnson \cite{johnson2014state} developed an analytical framework for local positive connectivity of quantum channels in tripartite systems, achieving precise characterizations in low-dimensional cases. Leveraging Schur-Weyl duality, Goodman and Wallach \cite{goodman2009symmetry} showed that all $U(n)$-invariant operators decompose into irreducible representations of the symmetric group $S(n)$. Jafarizadeh \cite{jafarizadeh2020study} utilized Twirl operations to prove that the density matrix of three-qubit Werner states exhibits a block-diagonal structure in the total spin representation, with eigenvectors corresponding to idempotents of the $S_3$ correlation scheme.

The connection to the theory of irreducible representations \cite{munn1961class} has attracted significant mathematical interest. In 2013, Studzinski \cite{studzinski2013commutant} constructed irreducible representations of partially transposed permutation operators and demonstrated that when the local dimension \( d > n-2 \), the algebra is isomorphic to the walled Brauer algebra and is semisimple. Subsequent research in 2025 \cite{studzinski2025irreducible} proposed a recursive method for constructing irreducible matrix representations of symmetric group algebras, extending the approach to residual ideals and establishing a systematic algebraic framework for analyzing symmetric quantum states. This problem is also closely related to the study of multi-body local unitary equivalence. Chen \cite{chen2015universal} proved the universality of single-occupied subspaces under local unitary groups for three-fermion systems, which fails for \( N > 3 \), with counterexamples such as BCS states that cannot be transformed via local unitary equivalence. Song \cite{song2022proof} demonstrated that for any tripartite mixed state \( \rho_{ABC} \), the ranks of the reduced density matrices satisfy the inequality \( r(\rho_{AB}) \cdot r(\rho_{AC}) \geq r(\rho_{BC}) \), and further generalized this result to multipartite quantum systems, including conditions for equality. Shi \cite{shi2025entanglement} introduced the concept of entanglement detection length (EDL) for symmetric quantum states, determining it through marginal separability and providing a semidefinite programming upper bound, thereby revealing the maximal discrepancy between EDL and state determination length.

However, existing research has not yet addressed the conjugate transpose of unitary matrices. In quantum physics, matrix transposition is a common operation, and its practical significance can be represented as a mapping, channel, or noise, used to describe whether the initial state can remain invariant under noise interference. Based on the above discussion, this paper focuses on the general solution of the following equation,
\begin{equation}
	(U \otimes U^\dagger) W (U \otimes U^\dagger)^\dagger = W
	\label{original}.
\end{equation}

For convenience, our $W$ is not necessarily normalized; however, as a quantum state, it must be interpreted as a normalized positive semi-definite matrix. We employ a matrix partitioning approach to divide the $n^2 \times n^2$ matrix $W$ into $n \times n$ block matrices. By utilizing the matrix form of Schur's lemma and the arbitrariness of $U$, we solve the problem through a divide-and-conquer strategy: diagonal blocks and off-diagonal blocks are separately investigated in Lemma \ref{le:} and \ref{diag1}, followed by merging the results to conclude that quantum state $W$ must be expressed as $cI_n(c\geq 0)$. Based on this, we relax the constraints on the matrices and provide solutions to the equation for several specific cases.

Subsequently, we conduct further analyses on $U \otimes V$ and $U \otimes U \otimes U^{\dagger}$. By selecting matrices with distinct eigenvalues, we demonstrate that the commutant of $U \otimes V$ consists solely of scalar matrices in Theorem \ref{uv}. For $U \otimes U \otimes U^{\dagger}$, according to Werner's result \cite{werner1989quantum}, $W=xI_{n^3} + yF_n \otimes I_n$
constitutes partial solutions to the equation. We continue to adopt the divide-and-conquer strategy. We first deduce that solving $W$ is equivalent to simplifying maatrix block $W_{1,1}$ and $W_{1,2}$ in Theorem \ref{1112}.By substituting permutation matrices and special unitary matrices into the equation in Lemma \ref{simplify-1} to Theorem \ref{1112=0}, we prove that the aforementioned expression must be the complete set of solutions.

This study provides a unified framework for analyzing high-dimensional matrix structures, which can facilitate advancements in quantum information theory, tensor decomposition algorithms. The rest of this paper is organized as follows. In Sec. \ref{sec:pre} we introduce the preliminary knowledge used in this paper. We present our main findings in Sec. \ref{sec:commutant} and \ref{sec:Uotimes V}. Finally we conclude in Sec. \ref{sec:con}.

\section{Preliminaries}
\label{sec:pre}

We introduce some basic facts and definitions used in this article. We assume that $\mathbb{C}^d$ is the $d$-dimensional Hilbert space. We refer to $I_d$ as the $d \times d$ unit matrix, $P_{ij}$ as the permutation matrix obtained by exchanging the $i-th$ and $j-th$ rows of $I_n$, and $E_{ij}$ as the matrix with 1 in the $(i,j)$-th entry and 0 elsewhere. We also denote $A^T,A^\dagger,A^*$ as the transposition, conjugate transposition,complex conjugation of the matrix $A$ respectively. We begin with the simplest unitary matrices.
\begin{lemma}
Every $2\times 2$ unitary matrices can be written with the following general form, where \( \theta \in \left[0, \frac{\pi}{2}\right] \), \( \alpha \in [0, 2\pi] \), \( \beta \in [0, 2\pi] \), and \( \varphi \in [0, 2\pi] \). The determinant of \( U \) equals to \( e^{i\varphi} \).
\begin{equation}
U=\left( \begin{array}{cc} 
e^{i\alpha}\cos\theta & -e^{i(\varphi-\beta)}\sin\theta \\ 
e^{i\beta}\sin\theta & e^{i(\varphi-\alpha)}\cos\theta 
\end{array} \right)
=\left( \begin{array}{cc} 
e^{-i\beta} & 0 \\ 
0 & e^{-i\alpha} 
\end{array} \right)
\left( \begin{array}{cc} 
\cos\theta & -\sin\theta \\ 
\sin\theta & \cos\theta 
\end{array} \right)
\left( \begin{array}{cc} 
e^{i(\alpha+\beta)} & 0 \\ 
0 & e^{i\varphi} 
\end{array} \right).
\end{equation}
\label{second-unitary}
\qed
\end{lemma}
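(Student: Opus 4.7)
The plan is to parametrize an arbitrary $2\times 2$ unitary $U = \begin{pmatrix} a & b \\ c & d \end{pmatrix}$ directly from the defining relations $U^\dagger U = I_2$, which amount to $|a|^2 + |c|^2 = 1$, $|b|^2 + |d|^2 = 1$, and $a\bar b + c\bar d = 0$. I would begin with the first column: since $(a,c)^T$ is a unit vector in $\mathbb{C}^2$, its polar representation allows me to write $a = e^{i\alpha}\cos\theta$ and $c = e^{i\beta}\sin\theta$ with $\theta \in [0,\pi/2]$ and $\alpha,\beta \in [0,2\pi]$, which exhausts all possibilities for the first column.

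Next I would determine the second column using the remaining unitarity constraints. The orthogonality equation $a\bar b + c\bar d = 0$ together with $|b|^2 + |d|^2 = 1$ forces $(b,d)^T$ to lie on a one-parameter family indexed by a single global phase, which I would introduce as $e^{i\varphi}$. A short direct substitution then verifies that the choice $b = -e^{i(\varphi-\beta)}\sin\theta$ and $d = e^{i(\varphi-\alpha)}\cos\theta$ satisfies both constraints, yielding exactly the matrix displayed in the statement. The determinant claim follows in one line from $\det U = ad - bc = e^{i\varphi}\cos^2\theta + e^{i\varphi}\sin^2\theta = e^{i\varphi}$, and the three-factor decomposition on the right-hand side is confirmed by a mechanical $2\times 2$ multiplication of the two diagonal factors against the real rotation matrix.

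The main obstacle I anticipate is essentially bookkeeping rather than mathematics: at the degenerate values $\theta = 0$ and $\theta = \pi/2$ one of the phases $\alpha, \beta$ ceases to be determined by $U$, so the parametrization is not unique. However, the lemma asserts only existence of such a representation, not uniqueness of the parameters, so any admissible choice of the redundant phase suffices and no additional argument is required. Aside from this cosmetic point, the proof is a routine verification, and I do not foresee any genuine technical difficulty.
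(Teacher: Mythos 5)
Your argument is correct: parametrizing the first column of a unit-norm pair as $(e^{i\alpha}\cos\theta,\ e^{i\beta}\sin\theta)^T$, observing that the orthogonal complement of a nonzero vector in $\mathbb{C}^2$ is one-dimensional so the second column must be $e^{i\varphi}(-e^{-i\beta}\sin\theta,\ e^{-i\alpha}\cos\theta)^T$ for some phase $\varphi$, and then checking the determinant and the three-factor product is exactly the standard derivation, and your remark that the degenerate cases $\theta=0,\pi/2$ only affect uniqueness (not existence) of the parameters disposes of the one genuine subtlety. Note that the paper itself states this lemma as a known fact with no proof given (it is closed immediately with a \qed), so there is no authorial argument to compare against; your verification is the natural one and fully adequate.
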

The content of this study is also closely related to the commutativity of matrices. Now we introduce two Lemmas related to this topic.
\begin{lemma}
Let \( A \) and \( B \) be $n \times n$ square matrices, where \( A \) has \( n \) distinct eigenvalues, and \( AB = BA \). Then there exists a polynomial \( f(x) \) of degree not exceeding \( n-1 \) such that \( B = f(A) \).
\label{commute1}
\end{lemma}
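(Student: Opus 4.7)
The plan is to exploit diagonalizability: since $A$ has $n$ distinct eigenvalues, it is similar to a diagonal matrix $D = \operatorname{diag}(\lambda_1,\dots,\lambda_n)$ via some invertible $P$, i.e.\ $A = P D P^{-1}$. I would conjugate the commutation relation by $P^{-1}$ to move the entire problem into the basis in which $A$ is diagonal, setting $C = P^{-1} B P$; the hypothesis $AB = BA$ becomes $DC = CD$.

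Next I would argue that $C$ must itself be diagonal. Comparing entries, $(DC)_{ij} = \lambda_i C_{ij}$ and $(CD)_{ij} = \lambda_j C_{ij}$, so $(\lambda_i - \lambda_j) C_{ij} = 0$ for all $i, j$. The distinctness of the eigenvalues then forces $C_{ij} = 0$ whenever $i \neq j$, so $C = \operatorname{diag}(\mu_1,\dots,\mu_n)$ for some scalars $\mu_i$.

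The final step is to produce the polynomial. The goal reduces to finding $f$ with $\deg f \le n-1$ such that $f(A) = B$, equivalently $f(D) = C$, equivalently $f(\lambda_i) = \mu_i$ for each $i$. This is exactly the Lagrange interpolation problem on the $n$ distinct nodes $\lambda_1,\dots,\lambda_n$, which has the unique solution
\[
f(x) \;=\; \sum_{i=1}^{n} \mu_i \prod_{j \neq i} \frac{x - \lambda_j}{\lambda_i - \lambda_j},
\]
a polynomial of degree at most $n-1$. Substituting $A$ gives $f(A) = P f(D) P^{-1} = P C P^{-1} = B$, as required.

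There is no real obstacle here; the only point that deserves care is the reduction in the second paragraph, where distinctness of the eigenvalues is used in an essential way. Without it, $C$ need only be block-diagonal, and a polynomial of degree $\le n-1$ might fail to exist (for instance if $A$ is a scalar multiple of the identity but $B$ is not). Distinctness is precisely what makes the interpolation nodes well-posed and the diagonalization rigid enough for the argument to close.
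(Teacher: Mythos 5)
Your proof is correct and complete: diagonalizing $A$, using the distinctness of the eigenvalues to force the conjugated $B$ to be diagonal, and then solving the Lagrange interpolation problem $f(\lambda_i)=\mu_i$ with $\deg f \le n-1$ is exactly the standard argument for this fact, and your closing remark correctly identifies where distinctness is essential. The paper itself states this lemma as a known preliminary without proof, so there is nothing to compare against; your write-up would serve as a valid proof of it.
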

\begin{lemma}
The matrix that commutes with all $n \times n$ unitary matrices is a scalar matrix $cI_n, c\in \mathbb{C}$.
\label{commute2}
\end{lemma}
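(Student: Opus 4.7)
The plan is to reduce the statement to two well-chosen families of unitaries, using Lemma \ref{commute1} to control the first step and a direct entrywise argument for the second. Let $B$ be any $n\times n$ matrix commuting with every $n\times n$ unitary. The goal is to show $B=cI_n$ for some $c\in\mathbb{C}$.

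First I would pick a single diagonal unitary $U_0=\mathrm{diag}(e^{i\theta_1},\ldots,e^{i\theta_n})$ with the $\theta_k$ chosen pairwise distinct modulo $2\pi$, so that $U_0$ has $n$ distinct eigenvalues. Since $B$ commutes with $U_0$ in particular, Lemma \ref{commute1} yields a polynomial $f$ of degree at most $n-1$ with $B=f(U_0)$. Because $U_0$ is diagonal, $f(U_0)$ is also diagonal, so $B=\mathrm{diag}(b_1,\ldots,b_n)$ must itself be diagonal.

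Next I would test $B$ against the permutation unitaries $P_{ij}$ (the transposition matrices used throughout the paper). The relation $BP_{ij}=P_{ij}B$, applied to a diagonal $B$, forces $b_i=b_j$ by comparing the $(i,j)$ and $(j,i)$ entries on both sides: the left side has $b_i$ in position $(i,j)$, while the right side has $b_j$ there. Since $i,j$ are arbitrary, all diagonal entries coincide, giving $B=cI_n$ with $c:=b_1$. Conversely, scalar matrices obviously commute with every unitary, completing the characterization.

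There is no serious obstacle here; the proof is two short reductions. The only subtle point is choosing the first unitary so that Lemma \ref{commute1} actually applies, which is why I insist on pairwise distinct phases in $U_0$. Everything else is a routine entrywise check.
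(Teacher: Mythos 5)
Your proof is correct. The paper itself states Lemma \ref{commute2} in the preliminaries without proof (it is a standard fact, essentially Schur's lemma for the defining representation of the unitary group), and your two-step reduction --- first forcing $B$ to be diagonal by commuting it with a diagonal unitary with pairwise distinct phases via Lemma \ref{commute1}, then equalizing the diagonal entries with the transposition unitaries $P_{ij}$ --- is sound and is exactly the style of argument the paper deploys elsewhere, e.g.\ in the proof of Lemma \ref{diag1}.
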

Next, we introduce some basic properties on Kronecker product used in this paper. 
\begin{lemma}
    (1)If $A,B,C,D$ are matrices of such size that can form matrix products $AC$ and $BD$, then $(A\otimes B)(C\otimes D)=(AC\otimes BD)$;\\
    (2)Suppose that $A$ is a $n\times n$ matrix that eigenvalues are $\lambda_1,\cdots,\lambda_n$, $B$ is a $m\times m$ matrix that eigenvalues are $\mu_1,\cdots,\mu_m$, then the eigenvalues of $A\otimes B$ are $\lambda_i\mu_j,i=1,\cdots,n,j=1,\cdots,m$.
    \end{lemma}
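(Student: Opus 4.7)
The plan is to handle the two parts of this Kronecker-product lemma independently, since they call for quite different techniques: part (1) is a direct computation via block structure, while part (2) follows cleanly from upper-triangularization combined with part (1).

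For part (1), I would write $A\otimes B$ in its standard block form, so that the $(i,j)$-th block equals $a_{ij} B$. Treating $(A\otimes B)(C\otimes D)$ as a product of block matrices, the $(i,k)$-th block of the product becomes $\sum_{j} a_{ij}\, c_{jk}\, BD = (AC)_{ik}\, BD$, which is precisely the $(i,k)$-th block of $AC\otimes BD$. The size hypotheses are used exactly to ensure the inner sums and the matrix products $BD$ and $AC$ make sense. This is essentially a bookkeeping argument and requires no clever idea.

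For part (2), I would invoke Schur's theorem to obtain unitary matrices $P,Q$ with $P^{\dagger}AP=T_A$ and $Q^{\dagger}BQ=T_B$ upper triangular, with the $\lambda_i$ and $\mu_j$ on the respective diagonals. Applying part (1) yields
\[
(P\otimes Q)^{\dagger}(A\otimes B)(P\otimes Q) \;=\; T_A\otimes T_B.
\]
Since $T_A$ is upper triangular, the block form of $T_A\otimes T_B$ has zero blocks strictly below the block diagonal, and each block on the block diagonal is $\lambda_i T_B$, itself upper triangular with diagonal entries $\lambda_i\mu_1,\dots,\lambda_i\mu_m$. Hence $T_A\otimes T_B$ is upper triangular with diagonal $\{\lambda_i\mu_j\}$, and a similarity transformation preserves the spectrum, so these are the eigenvalues of $A\otimes B$.

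The only subtle point is to make sure the enumeration of diagonal entries of $T_A\otimes T_B$ is handled correctly, i.e.\ that the product of two upper triangular matrices under the Kronecker product is again upper triangular in the standard lexicographic block ordering; this is immediate from the block description above but is the one place where carelessness could hide. Apart from that, both parts are routine, so I do not anticipate a genuine obstacle — the main content is choosing the right coordinate system (block indexing for (1), simultaneous Schur form for (2)).
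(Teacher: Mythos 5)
Your proposal is correct. The paper states this lemma as a standard preliminary fact about Kronecker products and gives no proof at all, so there is nothing to diverge from: your block-multiplication computation for part (1) and the Schur-triangularization argument for part (2) are the standard complete arguments, and the Schur route has the virtue of handling non-diagonalizable $A$, $B$ and algebraic multiplicities correctly (the characteristic polynomial of $T_A\otimes T_B$ is visibly $\prod_{i,j}(x-\lambda_i\mu_j)$). The only step you use implicitly is $(P\otimes Q)^\dagger = P^\dagger\otimes Q^\dagger$ (equivalently, that $P\otimes Q$ is unitary), which follows immediately from the same block description together with part (1); it is worth one line but is not a gap.
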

\begin{lemma}
If the matrix $A$ commutes with $W_1$, and the matrix $B$ commutes with $W_2$, then the matrix  $A\otimes B$ commutes with $W_1\otimes W_2$.
    \label{commute3}
\end{lemma}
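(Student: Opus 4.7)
The plan is to give a short direct proof based entirely on the mixed-product identity $(A\otimes B)(C\otimes D)=(AC)\otimes(BD)$ recorded as part (1) of the preceding lemma on Kronecker products. No spectral decomposition, basis change, or inductive argument is required; the claim is a purely algebraic identity in the tensor-product algebra, and the commutativity hypotheses on $A,W_1$ and $B,W_2$ feed in in a completely parallel way.

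Step by step, I would first expand the product $(A\otimes B)(W_1\otimes W_2)$ using the mixed-product rule to obtain $(AW_1)\otimes(BW_2)$. The hypotheses $AW_1=W_1A$ and $BW_2=W_2B$ let me rewrite this as $(W_1A)\otimes(W_2B)$ without touching the tensor structure. A second application of the mixed-product rule, read in the opposite direction, then collapses this into $(W_1\otimes W_2)(A\otimes B)$, which is exactly the desired commutation $ (A\otimes B)(W_1\otimes W_2)=(W_1\otimes W_2)(A\otimes B)$.

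There is essentially no obstacle to overcome; the only bookkeeping point is to make sure that the matrix products $AW_1$, $W_1A$, $BW_2$, $W_2B$ are all defined, which is automatic because the commutativity hypotheses force $A$ and $W_1$ to be square of the same size, and similarly for $B$ and $W_2$. For the purposes of later sections, the real value of the lemma is that it provides a convenient supply of elements in the commutant of a tensor product $W_1\otimes W_2$ built from elements of the individual commutants, which will be exploited when analyzing the symmetries of $U\otimes U^\dagger$, $U\otimes V$, and $U\otimes U\otimes U^\dagger$ in the main theorems.
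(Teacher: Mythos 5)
Your proof is correct and uses exactly the route the paper intends: the paper states this lemma without an explicit proof, having just recorded the mixed-product identity $(A\otimes B)(C\otimes D)=(AC)\otimes(BD)$ for precisely this purpose, and your two applications of that identity together with the commutativity hypotheses are the standard argument. Nothing is missing.
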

\begin{lemma}
Let $F_n = \sum\limits_{i, j=1}^{n} (|i\rangle \langle j|) \otimes (|j\rangle \langle i|)$. If we divide $F_n$ into $n^2$ $n\times n$ matrix blocks, then the $(i,j)-th(i\neq j)$ block of $F_n$ is $E_{ji}$; the $(i,i)-th$ block of $F_n$ is $E_{ii}$.\\
\label{centrosymmetric}
\end{lemma}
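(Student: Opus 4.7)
The plan is to prove this lemma by a direct calculation that unpacks the definition of $F_n$ together with the standard block decomposition of $n^2 \times n^2$ matrices regarded as operators on $\mathbb{C}^n \otimes \mathbb{C}^n$.

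First, I would recall the general identification used implicitly throughout the paper: any $n^2 \times n^2$ matrix $M$, partitioned into an $n \times n$ grid of $n \times n$ blocks $M_{ij}$, admits the unique tensor expansion $M = \sum_{i,j=1}^n E_{ij} \otimes M_{ij}$, so the $(i,j)$-th block of $M$ is by definition the coefficient of $E_{ij}$ in this expansion. This identification is an immediate consequence of the Kronecker product convention recalled in the preliminaries, since $E_{ij} \otimes M_{ij}$ places the entire matrix $M_{ij}$ into the $(i,j)$-th block of the large matrix and zero elsewhere.

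Next, I would rewrite $F_n$ in the $E_{ij}$ notation. Using $|i\rangle\langle j| = E_{ij}$ and $|j\rangle\langle i| = E_{ji}$, the definition of $F_n$ becomes
\begin{equation}
F_n = \sum_{i,j=1}^n E_{ij} \otimes E_{ji}.
\end{equation}
Matching this with the generic block expansion $\sum_{i,j} E_{ij} \otimes M_{ij}$ reads off $M_{ij} = E_{ji}$ for every pair $(i,j)$. Specializing to $i \neq j$ yields the first half of the statement, and specializing to $i = j$ gives $M_{ii} = E_{ii}$, recovering the second half.

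Because the whole argument reduces to bookkeeping, there is no substantive obstacle. The only point requiring minor care is the tensor-product basis ordering: one must confirm that $|i\rangle \otimes |j\rangle$ is assigned the global index compatible with the Kronecker convention $A \otimes B$ from the preliminaries, so that the block-expansion identity above is literally correct. A quick check for $n = 2$ makes this concrete, pinning down the orientation before writing the general formula, after which the lemma follows with no further calculation.
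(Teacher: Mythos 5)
Your proposal is correct: rewriting $F_n=\sum_{i,j}E_{ij}\otimes E_{ji}$ and matching it against the block expansion $\sum_{i,j}E_{ij}\otimes M_{ij}$ immediately gives the $(i,j)$-block $E_{ji}$ (hence $E_{ii}$ on the diagonal). The paper states this lemma in the preliminaries without proof, treating it as exactly this routine Kronecker bookkeeping, so your argument is the intended one and complete.
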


\section{Commutant structure for $U\otimes U^\dagger$ transformations}
\label{sec:commutant}

In this section, we focus on the equation \eqref{original}. We partition the $n^2\times n^2$ positive semidefinite matrix $W$ into $n^2$ blocks $n\times n$ matrices, and label them as $W_{ij}, i,j\in \{1,\cdots n\}.$ We first demonstrate a lemma about off-diagonal matrix blocks $W_{i,j}$ with $i\ne j$. We shall refer to $P_{i,j}$ as the permutation matrix switching the matrix elements $(i,j)$ and $(j,i)$.

\begin{lemma}
\label{le:}
Every matrix $ W_{i,j}(i\neq j)$ can be expressed as $P(i,j)W_{1,2}P^T(i,j)$ , where $P(i,j)$ is the products of some permutation matrices.
\end{lemma}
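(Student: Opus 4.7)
The strategy is to exploit the freedom in equation \eqref{original} by specializing the unitary $U$ to an arbitrary permutation matrix $P$. Since any permutation matrix is real and unitary, we have $P^\dagger = P^T$, and the invariance condition reduces to $(P \otimes P^T)\, W\, (P^T \otimes P) = W$. The key observation is that this conjugation induces a clean ``relabel-and-conjugate'' action on the block decomposition of $W$, which will yield the stated identity directly.

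The first step is to expand $W$ in its block form $W = \sum_{i,j=1}^n E_{i,j} \otimes W_{i,j}$. Using the mixed-product property of the Kronecker product, I would rewrite the conjugation as
\begin{equation*}
(P \otimes P^T)\, W\, (P^T \otimes P) \;=\; \sum_{i,j} (P E_{i,j} P^T) \otimes (P^T W_{i,j} P).
\end{equation*}
A short direct calculation, writing $E_{i,j} = |i\rangle\langle j|$ and using $P|k\rangle = |\pi(k)\rangle$ for $\pi$ the permutation associated with $P$, shows that $P E_{i,j} P^T = E_{\pi(i),\pi(j)}$. Comparing coefficients of distinct basis matrices $E_{k,l}$ against the original expansion $W = \sum_{k,l} E_{k,l} \otimes W_{k,l}$ then yields the key block identity $W_{\pi(i),\pi(j)} = P^T W_{i,j} P$, valid for every permutation $\pi$ and every pair $(i,j)$.

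With this identity in hand, for any pair $i \neq j$ I would choose a permutation $\pi \in S_n$ with $\pi(1)=i$ and $\pi(2)=j$ (on the remaining indices, $\pi$ can be taken arbitrarily), and then apply the identity at $(i,j)=(1,2)$. This gives $W_{i,j} = P^T W_{1,2} P$, and setting $P(i,j) := P^T$ (which is exactly the permutation matrix of $\pi^{-1}$) produces the desired form $W_{i,j} = P(i,j)\, W_{1,2}\, P^T(i,j)$. Since every element of $S_n$ factors as a product of transpositions, $P(i,j)$ is itself a product of the elementary swap matrices $P_{a,b}$ introduced in Section \ref{sec:pre}, which matches the lemma's phrasing.

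The main obstacle is purely bookkeeping: I need to confirm the index convention so that $P E_{i,j} P^T = E_{\pi(i),\pi(j)}$ rather than $E_{\pi^{-1}(i),\pi^{-1}(j)}$, and orient the conjugation so that $P^T$ (as opposed to $P$) sandwiches $W_{1,2}$ on the correct side in order to land on the exact statement. Once these conventions are pinned down, the argument is a direct specialization to the block $(1,2)$ combined with the factorization of permutations into transpositions, and no additional structural information about $W$ (such as positive semidefiniteness) is actually required.
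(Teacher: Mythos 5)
Your proof is correct, and it reaches the lemma by a cleaner route than the paper does. Both arguments rest on the same underlying idea — specialize $U$ in the symmetry equation \eqref{original} to permutation matrices — but the execution differs. The paper proceeds by induction on the block index, substituting one adjacent transposition $P_{k,k+1}$ (and $P_{1,k+1}$) at a time and chaining the resulting block relations together; this is exactly the form that gets re-used in the later lemmas, where specific transpositions such as $P_{1,2}$ and $P_{2,3}$ are plugged in. You instead substitute a single arbitrary permutation matrix $P$, expand $W=\sum_{i,j}E_{i,j}\otimes W_{i,j}$, and use the mixed-product property together with $PE_{i,j}P^{T}=E_{\pi(i),\pi(j)}$ to obtain the general covariance rule $W_{\pi(i),\pi(j)}=P^{T}W_{i,j}P$ in one stroke; choosing $\pi$ with $\pi(1)=i$, $\pi(2)=j$ then gives the lemma immediately, with $P(i,j)=P^{T}$ a genuine permutation matrix (hence trivially a product of transpositions). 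Your version avoids the induction, makes the index bookkeeping explicit, and actually proves a slightly stronger statement (the full $S_n$-covariance of the block array), while the paper's inductive formulation is tailored to the specific transpositions it reuses downstream; your handling of the orientation issue ($P$ versus $P^{T}$ around $W_{1,2}$) is also resolved correctly, and you rightly note that positive semidefiniteness plays no role here.
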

\begin{proof}
We prove the claim by induction. Suppose for some integer $k\ge2$, we have  $W_{i,j}=P(i,j)W_{1,2}P(i,j)^T,$where $P(i,j)$ is the products of some permutation matrices. We prove the claim for $k+1$. Setting $U=P_{k,k+1}$ in \eqref{original}, we have
\begin{equation*}
    W_{k,j}=P(k,j)W_{1,2}P(k,j)^T=P_{k,k+1}W_{k+1,j}P_{k,k+1},\\
    W_{j,k}=P(j,k)W_{1,2}P(j,k)^T=P_{k,k+1}W_{j,k+1}P_{k,k+1},\,j=1,\cdots,k-1.
\end{equation*} So $W_{i,k+1},W_{k+1,j},\forall i\neq j(i,j\leq k-1)$ can be expressed as the product of $W_{1,2}$ and some permutation matrices. For the remaining $W_{k,k+1},W_{k+1,k}$, we choose $U=P_{1,k+1}$ and obtain $P(k,1)W_{1,2}P(k,1)^T=P_{1,k+1}W_{k,k+1}P_{1,k+1}$, as well as $P(1,k)W_{1,2}P(1,k)^T=P_{1,k+1}W_{k+1,k}P_{1,k+1}$. So the induction holds.
\end{proof}

Next we investigate two cases, namely diagonal matrix blocks and off-diagonal matrix blocks in \eqref{original}. This leads to Lemmas \ref{diag2} and \ref{diag1}, respectively.

\begin{lemma}
If $W$ satisfies \eqref{original}, then $W_{i,j}=\mathbf{O}, \forall i\neq j$.
\label{diag2}
\end{lemma}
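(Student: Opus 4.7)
By Lemma \ref{le:}, every off-diagonal block $W_{i,j}$ with $i\neq j$ is of the form $P(i,j)W_{1,2}P(i,j)^T$ for some permutation $P(i,j)$, so it suffices to prove $W_{1,2}=\mathbf{O}$.

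The first step of my plan is to pin down the support of $W_{1,2}$ using the diagonal subgroup of unitaries. I would substitute $U=\diag(e^{i\theta_1},\ldots,e^{i\theta_n})$ into \eqref{original}. Using the identity $(U\otimes U^\dagger)(E_{ij}\otimes W_{ij})(U^\dagger\otimes U)=(UE_{ij}U^\dagger)\otimes(U^\dagger W_{ij}U)$, the $(k,l)$-entry of the $(1,2)$-block picks up a phase $e^{i(\theta_1-\theta_2+\theta_l-\theta_k)}$. Demanding invariance for all real $\theta_m$ forces this phase to be identically $1$, hence $(W_{1,2})_{k,l}=0$ unless $(k,l)=(1,2)$; thus $W_{1,2}=a\,E_{1,2}$ for some $a\in\bbC$. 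The same argument applied to a diagonal block $W_{i,i}$ (where the phase becomes $e^{i(\theta_l-\theta_k)}$) shows each $W_{i,i}$ must be a diagonal matrix.

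To force $a=0$, I would substitute non-diagonal unitaries of the block form $U=U_0\oplus I_{n-2}$, where $U_0$ is a chosen $2\times 2$ unitary acting on the first two basis vectors. Since $U$ acts trivially outside the span of $|1\rangle,|2\rangle$, multiplying the $(1,2)$-block constraint on the left by $U$ and on the right by $U^\dagger$ gives
\begin{equation*}
U_0 W_{1,2}U_0^\dagger=\sum_{i,j=1}^{2}(U_0)_{1i}\,\overline{(U_0)_{2j}}\,W_{ij},
\end{equation*}
where the only nonzero terms on the right are the diagonals $W_{1,1},W_{2,2}$ and the sparse off-diagonals $aE_{1,2},aE_{2,1}$. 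Choosing first $U_0=\bigl(\begin{smallmatrix}\cos\theta & -\sin\theta\\ \sin\theta & \cos\theta\end{smallmatrix}\bigr)$ with $\sin\theta\cos\theta\neq 0$, and then $U_0=\tfrac{1}{\sqrt{2}}\bigl(\begin{smallmatrix}1 & i\\ i & 1\end{smallmatrix}\bigr)$, and comparing the $(1,1)$-entry of both sides, I anticipate two scalar equations of opposite sign of the form $a=(W_{2,2})_{1,1}-(W_{1,1})_{1,1}$ and $a=(W_{1,1})_{1,1}-(W_{2,2})_{1,1}$. Summing them yields $2a=0$, hence $a=0$ and $W_{1,2}=\mathbf{O}$. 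An appeal to Lemma \ref{le:} then gives $W_{i,j}=\mathbf{O}$ for all $i\neq j$.

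The main obstacle is choosing the pair of non-diagonal unitaries so that the two resulting scalar equations in $a$ are independent: a single real rotation would only force $a=(W_{2,2})_{1,1}-(W_{1,1})_{1,1}$, which is consistent with nonzero $a$; the complex phase introduced by the second unitary is what breaks the symmetry and pins down $a=0$.
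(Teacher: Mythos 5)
Your proposal is correct, and I checked the finishing step: with $U=U_0\oplus I_{n-2}$, comparing $(1,1)$-entries gives $-a\cos\theta\sin\theta=\cos\theta\sin\theta\bigl((W_{1,1})_{1,1}-(W_{2,2})_{1,1}\bigr)$ for the real rotation and $-\tfrac{i}{2}a=-\tfrac{i}{2}\bigl((W_{1,1})_{1,1}-(W_{2,2})_{1,1}\bigr)$ for $U_0=\tfrac{1}{\sqrt2}\bigl(\begin{smallmatrix}1&i\\ i&1\end{smallmatrix}\bigr)$, which are precisely your two equations of opposite sign, so $a=0$. The skeleton (reduce everything to $W_{1,2}$ via Lemma \ref{le:}, then kill $W_{1,2}$ by substituting special unitaries) is the same as the paper's, but the route through it differs: the paper feeds in the phased swaps $U_1,U_2$ together with $W_{2,1}=P_{1,2}W_{1,2}P_{1,2}$ to conclude that only the $(2,1)$ entry of $W_{1,2}$ can survive, and then a phased $(1,3)$ permutation to annihilate that entry, never involving the diagonal blocks; you instead run a weight-type argument over the full diagonal torus $\diag(e^{i\theta_1},\dots,e^{i\theta_n})$, which pins $W_{1,2}$ to a multiple of $E_{1,2}$ in one systematic stroke (and yields diagonality of every $W_{i,i}$ for free, a fact the paper only establishes later inside Lemma \ref{diag1}), and then you play a real rotation against a complex one so that the unknown diagonal contributions $(W_{1,1})_{1,1},(W_{2,2})_{1,1}$ cancel. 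Your torus step is arguably cleaner and more reusable (it is the same mechanism the paper later uses in Lemma \ref{3d-diag}), at the mild cost of coupling the off-diagonal block to the diagonal ones in the last step. Two cosmetic points: by Hermiticity of $W$ the block $W_{2,1}$ equals $\bar a E_{2,1}$ rather than $aE_{2,1}$ (immaterial here, since the $E_{2,1}$ term never touches the compared $(1,1)$ entry), and the left-hand side of your block identity is strictly $UW_{1,2}U^\dagger$ with $U=U_0\oplus I_{n-2}$, which agrees with your $U_0W_{1,2}U_0^\dagger$ only because $W_{1,2}=aE_{1,2}$ is supported in the leading $2\times2$ corner; it would be worth saying so explicitly.
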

\begin{proof}
We need show the relation between the blocks $W_{12}$ and $W_{21}$, as well as the relation between the blocks $W_{12}$ and $W_{13}$. That is, we need find the permutation matrices $P_{ij}$ for the two relations in Lemma 4. First we choose $U=P_{1,2}$ in equation \eqref{original} and obtain
\begin{equation}
    W=\begin{pmatrix}
        \mathbf{O} & P_{1,2} & \mathbf{O}&\mathbf{O}  \\
        P_{1,2}& \mathbf{O} & \mathbf{O}&\mathbf{O}\\
        \mathbf{O}& \mathbf{O}& P_{1,2}& \mathbf{O}\\
        \mathbf{O}& \mathbf{O}& \mathbf{O}&  \ddots\\
    \end{pmatrix}\begin{pmatrix}
        W_{1,1} & W_{1,2} & * & * \\
        W_{2,1}& W_{2,2} & *& *\\
        *& *& *& *&\\
        *& *& *& \ddots&\\
    \end{pmatrix}\begin{pmatrix}
        \mathbf{O} & P_{1,2} & \mathbf{O}&\mathbf{O}  \\
        P_{1,2}& \mathbf{O} & \mathbf{O}&\mathbf{O}\\
        \mathbf{O}& \mathbf{O}& P_{1,2}& \mathbf{O}\\
        \mathbf{O}& \mathbf{O}& \mathbf{O}&  \ddots\\
    \end{pmatrix}=\begin{pmatrix}
       *& P_{1,2}W_{2,1}P_{1,2} & * &*  \\
        P_{1,2}W_{1,2}P_{1,2}& * & *&* \\
        *& *&*&* &\\
        *& *&* &\ddots\\
    \end{pmatrix}.
\label{e12_pre}
\end{equation}
So $W_{2,1}=P_{1,2}W_{1,2}P_{1,2}$. Similarly, let $U=P_{2,3}$, we can get $W_{1,3}=P_{2,3}W_{1,2}P_{2,3}$
Next, let $U=U_1=\begin{pmatrix}
        0 & 1 &  & & \\
        i& 0 & && \\
        & &1 & & \\
        &&&\ddots& \\
        &&& & 1\\
    \end{pmatrix},U=U_2=\begin{pmatrix}
        0 & i &  & & \\
        1& 0 & && \\
        & &1 & & \\
        &&&\ddots& \\
        &&& & 1\\
    \end{pmatrix}$ in equation \ref{original} respectively. Let $W_{1,2}=(b_{ij})_{n\times n}$, comparing  the (1,2)-blocks of $(U \otimes U^\dagger) W (U \otimes U^\dagger)^\dagger$  and $W$ we obtain
\[\footnotesize
\begin{array}{rcl}
W_{1,2} &=& \begin{pmatrix}
b_{11} & b_{12} & b_{13} & \cdots & b_{1n} \\
b_{21} & b_{22} & b_{23} & \cdots & b_{2n} \\
b_{31} & b_{32} & b_{33} & \cdots & b_{3n} \\
\vdots & \vdots & \vdots & \ddots & \vdots \\
b_{n1} & b_{n2} & b_{n3} & \cdots & b_{nn} \\
\end{pmatrix} \\
&\overset{U_1}{=}& -iU_1^{\dagger}P_{1,2}W_{1,2}P_{1,2}U_1^{\dagger} = \begin{pmatrix}
-ib_{11} & -b_{12} & -b_{13} & \cdots & -b_{1n} \\
b_{21} & -ib_{22} & -ib_{23} & \cdots & -ib_{2n} \\
b_{31} & -ib_{32} & -ib_{33} & \cdots & -ib_{3n} \\
\vdots & \vdots & \vdots & \ddots & \vdots \\
b_{n1} & -ib_{n2} & -ib_{n3} & \cdots & -ib_{nn} \\
\end{pmatrix} \\
&\overset{U_2}{=}& iU_2^{\dagger}P_{1,2}W_{1,2}P_{1,2}U_2^{\dagger} = \begin{pmatrix}
ib_{11} & -b_{12} & ib_{13} & \cdots & ib_{1n} \\
b_{21} & ib_{22} & b_{23} & \cdots & b_{2n} \\
ib_{31} & -b_{32} & ib_{33} & \cdots & ib_{3n} \\
\vdots & \vdots & \vdots & \ddots & \vdots \\
ib_{n1} & -b_{n2} & -ib_{n3} & \cdots & ib_{nn} \\
\end{pmatrix}
\end{array}
\]
We deduce that only $b_{21}$ in $W_{1,2}$ may be nonzero. Next, taking $U=U_3=\begin{pmatrix}
         &  &1  & & \\
        & 1 & && \\
        i& & & & \\
        &&&1& \\
        &&&& \ddots\\
        &&& & &1\\
    \end{pmatrix}$ and still computing the (1,2)-blocks similarly, we obtain $b_{21}=0$, thereby leading to $W_{1,2}=\mathbf{O}$, then $W_{i,j}=\mathbf{O},\forall i\neq j$.    
\end{proof}
\begin{lemma}
If $W$ satisfies equation (\ref{original}), then $W_{i,i}=cI_{n},\forall i\in \{1,\cdots n\}$,$\forall c\in \mathbb{C}$.
\label{diag1}
\end{lemma}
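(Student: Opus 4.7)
The plan is to leverage Lemma~\ref{diag2}, which has already killed every off-diagonal block, and then to force each diagonal block to be scalar by feeding carefully chosen unitaries into (\ref{original}). Block-expanding the equation and using $W_{i,j}=0$ for $i\ne j$ reduces the $(a,b)$-block of the left-hand side to $\sum_i U_{a,i}\overline{U_{b,i}}\,U^\dagger W_{i,i}U$, and setting $a=b$ yields the master identity
\[
W_{a,a} \;=\; \sum_{i=1}^n |U_{a,i}|^2\, U^\dagger W_{i,i}\,U,
\]
valid for every unitary $U$. I would then substitute three families of unitaries in succession.

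First, I would take $U=\diag(u,V)$ with $|u|=1$ and $V$ an arbitrary $(n-1)\times(n-1)$ unitary acting on $\lin\{\ket{2},\dots,\ket{n}\}$. Since $|U_{1,i}|^2=\delta_{i,1}$, the master identity at $a=1$ collapses to $W_{1,1}=U^\dagger W_{1,1}U$. Varying $V$ and invoking Lemma~\ref{commute2}, the restriction of $W_{1,1}$ to $\lin\{\ket{2},\dots,\ket{n}\}$ must be a scalar $\beta I_{n-1}$; taking $u=-1$ then kills the $(1,j)$ and $(j,1)$ entries of $W_{1,1}$ for $j\ge 2$. This pins $W_{1,1}$ down to $\alpha\proj{1}+\beta(I_n-\proj{1})$ for some $\alpha,\beta\in\mathbb{C}$. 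Next, substituting $U=P_{1,j}$ into the master identity transports this to $W_{j,j}=\alpha\proj{j}+\beta(I_n-\proj{j})$, with the \emph{same} constants $\alpha,\beta$ for every $j$.

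Finally, to collapse $\alpha$ onto $\beta$, I would plug in the Hadamard-like unitary $U=\tfrac{1}{\sqrt{2}}\begin{pmatrix}1&1\\1&-1\end{pmatrix}\oplus I_{n-2}$, for which $|U_{1,1}|^2=|U_{1,2}|^2=\tfrac{1}{2}$ and $|U_{1,i}|^2=0$ otherwise. The master identity becomes $W_{1,1}=\tfrac{1}{2}(U^\dagger W_{1,1}U+U^\dagger W_{2,2}U)$. Using $U^\dagger\ket{1}=\tfrac{1}{\sqrt{2}}(\ket{1}+\ket{2})$ and $U^\dagger\ket{2}=\tfrac{1}{\sqrt{2}}(\ket{1}-\ket{2})$, a short calculation gives $\tfrac{1}{2}(U^\dagger\proj{1}U+U^\dagger\proj{2}U)=\tfrac{1}{2}(\proj{1}+\proj{2})$, and after cancellation the constraint reduces to $(\alpha-\beta)(\proj{1}-\proj{2})=0$, forcing $\alpha=\beta$. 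The desired conclusion $W_{i,i}=\alpha I_n$ with a common scalar $\alpha$ then falls out.

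The main obstacle I anticipate is the first step: recognizing that fixing the first row of $U$ to be $(u,0,\dots,0)$ pins $U$ to block-diagonal form by unitarity, which is precisely what lets Schur's lemma (Lemma~\ref{commute2}) bite on the $(n-1)\times(n-1)$ subblock of $W_{1,1}$. Without this reduction, one is stuck trying to analyse the full identity against arbitrary unitaries. Once the reduction is made, the permutation transport and the Hadamard substitution are short verifications.
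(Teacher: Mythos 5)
Your proposal is correct and follows essentially the same route as the paper: use block-diagonal unitaries $\mathrm{diag}(u,V)$ together with Lemma~\ref{commute2} (and Lemma~\ref{diag2} for the block identity) to pin $W_{1,1}$ to $\alpha\ketbra{1}{1}+\beta(I_n-\ketbra{1}{1})$, transport to the other diagonal blocks by the permutations $P_{1,j}$, and then force the two eigenvalues to coincide with a mixing unitary. The only cosmetic differences are that the paper kills the off-diagonal entries of $W_{1,1}$ via a distinct-phase diagonal unitary and Lemma~\ref{commute1} (you use $u=-1$), and it equates $a=c$ with a unitary whose first row is uniformly $1/\sqrt{n}$ rather than your $2\times 2$ Hadamard block.
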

\begin{proof}
Let \( U = U_n = \begin{pmatrix}
        1 &   \\
        & U_{n-1} \\
    \end{pmatrix} \) in equation \ref{original} (where \( U_{n-1} \) is an arbitrary \((n-1) \times (n-1)\) unitary matrix), we obtain \( U_n^\dagger W_{1,1} U_n = W_{1,1} \).

Specifically, taking \( U_n = \text{diag}\{1, e^{i\theta_1}, \cdots, e^{i\theta_{n-1}}\} \) with \( \theta_j \neq \theta_k \) for all \( j \neq k \) and \( \theta_i \neq 2k\pi \), we can deduce that \( W_{1,1} \) is a diagonal matrix according to Lemma \ref{commute1}.  Let $W_{1,1}=\begin{pmatrix}
        a &   \\
        & W_0 \\
    \end{pmatrix}$, then we also have $U_{n-1}^\dagger W_0U_{n-1}=W_0$. According to the randomness of $U_{n-1}$, applying lemma \ref{commute2},we get  $W_{1,1}=\begin{pmatrix}
        a &   \\
        & cI_{n-1} \\
    \end{pmatrix}$. Next, letting $U=P_{1,2},\cdots P_{1,n}$ 
 in equation \ref{original} successively,,we have $W_{1,1}=P_{1,j}W_{j,j}P_{1,j},\forall j\in\{2,\cdots,n\}$ That means that $W_{j,j}=diag\{c,\cdots,c,a,c,\cdots,c\}$ where the $j^th$ diagonal element of $W_{j,j}$ is a.Furthermore, we choose U as a unitary matrix $U_0$ with the first row entirely composed of $\frac{1}{\sqrt{n}}$,we have
 $$W_{1,1}=((U_0 \otimes U_0^\dagger) W (U_0 \otimes U_0^\dagger)^\dagger)_{1,1}=\frac{1}{n}U_0^\dagger(\sum\limits_{i=1}^nW_{i,i})U_0=\frac{1}{n}(a+(n-1)c)I_n,$$
 where the second to last equation mark derives from Lemma \ref{diag2}. Thus $a=c,W_{i,i}=cI_{n},\forall i\in \{1,\cdots n\}.$
\end{proof}

Combining Lemmas \ref{diag2} and \ref{diag1}, we conclude the following fact.
\begin{theorem}
If $W$ satisfies $(U \otimes U^\dagger) W (U \otimes U^\dagger)^\dagger = W$ for all $n\times n$ unitary matrices $U$, then  $W=cI_{n^2}$ with a positive number $c$.
\label{thm:diag2}
\end{theorem}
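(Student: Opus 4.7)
The plan is to combine Lemmas \ref{diag2} and \ref{diag1}, which have been set up precisely so as to cover the two possible kinds of block in the partition of $W$ into an $n\times n$ array of $n\times n$ blocks $(W_{i,j})$: the off-diagonal blocks $W_{i,j}$ with $i\neq j$ and the diagonal blocks $W_{i,i}$. Once both lemmas are invoked, the conclusion falls out in a single assembly step.

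First I would observe that Lemma \ref{diag2} immediately gives $W_{i,j}=\mathbf{O}$ for every $i\neq j$, so $W$ is block-diagonal with blocks $W_{1,1},\ldots,W_{n,n}$. Next I would apply Lemma \ref{diag1}, whose conclusion is that each diagonal block has the form $W_{i,i}=cI_n$ and, crucially, with the \emph{same} constant $c$ on every block (this common value is obtained inside the proof of Lemma \ref{diag1} via the averaging unitary $U_0$ whose first row is $(1/\sqrt{n},\ldots,1/\sqrt{n})$, which mixes the diagonal blocks together and forces the two candidate scalars $a$ and $c$ to coincide). Assembling the pieces gives $W=cI_{n^2}$ as required.

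For the sign of $c$, I would invoke the standing assumption from the introduction that $W$ is interpreted as an (unnormalized) quantum state, hence positive semidefinite; since $cI_{n^2}$ has spectrum $\{c\}$, positivity forces $c\geq 0$, and nontriviality of $W$ forces $c>0$.

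The real obstacle in this chain has already been absorbed into the two lemmas, so at the level of the theorem itself no further nontrivial calculation is needed. If I were redoing the chain from scratch, the genuinely non-routine step would be the one inside Lemma \ref{diag1}: after a diagonal-phase unitary shows that each $W_{i,i}$ is itself diagonal, and after conjugation by permutations $P_{1,j}$ shows that these diagonals are permutations of one another, one still needs a unitary that is \emph{not} monomial in order to equate the lone differing entry with the other entries. Exhibiting such a unitary (for instance $U_0$, or a discrete Fourier transform) is the key trick; once that is in hand, the present theorem is immediate.
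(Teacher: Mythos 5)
Your proposal matches the paper's own argument: the theorem is obtained exactly by combining Lemma \ref{diag2} (vanishing of the off-diagonal blocks) with Lemma \ref{diag1} (each diagonal block equal to the same scalar matrix $cI_n$, with the common value forced by the averaging unitary $U_0$), plus positive semidefiniteness of $W$ to fix the sign of $c$. No gaps; this is essentially the same route as the paper.
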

Now we relax the condition that $U$ is a unitary matrix and continue to discuss the existence of solutions to the equation \eqref{original}. We begin with the case where $U$ is a $2\times 2$ orthogonal matrix in the following observation.
\begin{theorem} 
For all  $2\times 2$ orthogonal matrices \( U \), the following matrix is a positive semidefinite solution to \eqref{original}. 
\begin{equation}
    W = xI_4 + y M\otimes M( x > |y|),M=\begin{bmatrix}
    0&1\\
    -1&0
\end{bmatrix}.
\label{2-ortho}
\end{equation} 
\end{theorem}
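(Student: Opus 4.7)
The plan is to exploit linearity of the conjugation map $X\mapsto (U\otimes U^\dagger)X(U\otimes U^\dagger)^\dagger$ and verify invariance of the two summands of $W$ separately, then check positive semidefiniteness via the spectrum. First I would observe that for a $2\times 2$ orthogonal matrix $U$ one has $U^\dagger = U^T = U^{-1}$ and $U$ is real, so $(U\otimes U^\dagger)^\dagger = U^T\otimes U$. Using the mixed-product identity $(A\otimes B)(C\otimes D)=(AC)\otimes(BD)$ from the preliminaries, the identity summand is trivially preserved:
\[
(U\otimes U^T)\,I_4\,(U^T\otimes U) = (UU^T)\otimes(U^T U) = I_4.
\]
Hence everything reduces to showing $(UMU^T)\otimes(U^T M U) = M\otimes M$.

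The crux is the identity $UMU^T = \det(U)\,M$, valid for every $U\in\Ort(2)$. I would prove it by splitting $\Ort(2)$ into its two connected components. For a rotation $R(\theta)$, since all planar rotations commute, $R(\theta)MR(\theta)^T = M = \det(R(\theta))\,M$. For a reflection $U$ (determinant $-1$), a direct $2\times 2$ computation on a convenient representative such as $\mathrm{diag}(1,-1)$, combined with the fact that any reflection differs from it by a rotation, yields $UMU^T = -M = \det(U)\,M$. The same reasoning applies with $U$ replaced by $U^T$, so $U^T M U = \det(U)\,M$. The two determinant factors then square to $1$, giving
\[
(UMU^T)\otimes(U^T M U) = \det(U)^2\,(M\otimes M) = M\otimes M,
\]
which establishes the required invariance.

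For positive semidefiniteness I would observe that $M\otimes M$ is real and symmetric (since $M^T = -M$ implies $(M\otimes M)^T = (-M)\otimes(-M) = M\otimes M$) and an involution, because $(M\otimes M)^2 = M^2\otimes M^2 = (-I_2)\otimes(-I_2) = I_4$. Its eigenvalues are therefore $\pm 1$, so those of $W = xI_4 + y(M\otimes M)$ are $x\pm y$, each with multiplicity two; the hypothesis $x > |y|$ makes both strictly positive, so $W$ is in fact positive definite. The only nonroutine step in the whole argument is the identity $UMU^T = \det(U)\,M$ — everything else is direct Kronecker-product algebra together with elementary spectral analysis.
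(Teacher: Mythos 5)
Your proof is correct and takes essentially the same route as the paper's: the paper also splits the orthogonal group into rotations and reflections, observes $UMU^{T}=U^{T}MU=\pm M$ with the sign cancelling in the tensor product $(UMU^{T})\otimes(U^{T}MU)=M\otimes M$, and treats the identity summand by linearity; your unified identity $UMU^{T}=\det(U)\,M$ and the explicit eigenvalues $x\pm y$ merely package the same facts more cleanly and make the positive (semi)definiteness check explicit, which the paper leaves implicit. Be aware only that the paper's proof additionally claims, by substituting four specific orthogonal matrices, that every solution of \eqref{original} has the form \eqref{2-ortho}; that necessity direction is not covered by your argument, though the theorem as stated asserts only that the given $W$ is a positive semidefinite solution.
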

\begin{proof}
Let $U=\begin{bmatrix} 0&1\\ 1&0\end{bmatrix},\begin{bmatrix} 1&0\\ 0&-1\end{bmatrix},\frac{1}{\sqrt{2}}\begin{bmatrix} 1&-1\\ 1&1\end{bmatrix},\frac{1}{\sqrt{2}}\begin{bmatrix} 1&1\\ 1&-1\end{bmatrix}$, respectively. Using straightforward computation, one can show that $W$ can be written in the form of \eqref{2-ortho}. Especially for $M\otimes M$, since M satisfies $M^T=-M$, when $U$ is a rotation matrix, we have $UMU^T=U^TMU=M$; when $U$ is a reflection matrix, we have $UMU^T=U^TMU=-M$; thus we have $(U \otimes U^T)(M\otimes M)(U \otimes U^T)^T = (UMU^T)\otimes(U^TMU)=M\otimes M
$. Since both $I_4$ and $M\otimes M$ are solutions of equation \eqref{original}, and hence \eqref{2-ortho} is the general solution form of the equation.   
\end{proof}
\section{Applications by \( U \otimes V \) and \( U \otimes U \otimes U^\dagger \)}
\label{sec:Uotimes V}

In this section, we primarily explore several variants of the original equation \eqref{original}, focusing on the case where the Kronecker product of two matrices with different dimensions is considered. We first demonstrate the following fact.
It means that the commutator $W$ is trivial in such a case.
\begin{theorem} 
Let $U$ and $V$ respectively be $m\times m$ and $n\times n$ unitary matrices. Then, the  positive semi-definite solution $W$ satisfying the equation $ (U \otimes V)W(U \otimes V)^\dagger=W$ is a scalar matrix $cI_{mn}$ with some $c\geq 0$.
\label{uv}
\end{theorem}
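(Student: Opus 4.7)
The plan is to exploit the freedom to specialize one factor of $U\otimes V$ to the identity, reducing the problem to two successive applications of Lemma \ref{commute2}. Since the identity is itself unitary, the hypothesis implies both $(I_m\otimes V)W(I_m\otimes V)^\dagger=W$ for every unitary $V$ and $(U\otimes I_n)W(U\otimes I_n)^\dagger=W$ for every unitary $U$; each of these collapses one tensor slot.

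First I would partition $W$ as an $m\times m$ array of $n\times n$ blocks, writing $W=(B_{ij})_{i,j=1}^{m}$, with the first tensor factor supplying the outer block index. The matrix $I_m\otimes V$ is block diagonal with $V$ on each diagonal block, so the $(i,j)$-block of $(I_m\otimes V)W(I_m\otimes V)^\dagger$ is $VB_{ij}V^\dagger$. Equating this to $W$ for every unitary $V$ forces $VB_{ij}V^\dagger=B_{ij}$ for all $V$ and all $i,j$, so by Lemma \ref{commute2} each block is scalar: $B_{ij}=c_{ij}I_n$ for some $c_{ij}\in\mathbb{C}$. Collecting the scalars in a matrix $C=(c_{ij})$ gives $W=C\otimes I_n$.

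Next I would substitute $V=I_n$ back into the original equation. Using $(A\otimes B)(C\otimes D)=(AC)\otimes(BD)$, the left side becomes $(UCU^\dagger)\otimes I_n$, and comparing with $W=C\otimes I_n$ yields $UCU^\dagger=C$ for every unitary $U$. A second application of Lemma \ref{commute2} delivers $C=cI_m$ for some scalar $c$, whence $W=cI_{mn}$. Finally, positive semi-definiteness of $W$ forces $c\geq 0$, as required.

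The argument is structurally short and there is no serious obstacle: the two tensor slots decouple precisely because neither factor appears paired with its own adjoint in the conjugation (in sharp contrast with the $U\otimes U^\dagger$ case treated in Theorem \ref{thm:diag2}), so no intertwining between the two sides survives when one factor is set to the identity. The only care needed is to keep the block partition consistent with the ordering of the two tensor factors, and to note that specializing $U$ or $V$ to $I$ is legitimate since the identity is unitary.
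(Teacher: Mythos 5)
Your proof is correct, but it takes a genuinely different route from the paper. The paper fixes a single, cleverly chosen pair of diagonal unitaries $U_1, V_1$ whose phases have an irrational ratio, so that $U_1\otimes V_1$ has $mn$ distinct eigenvalues; Lemma \ref{commute1} then forces $W$ to be a polynomial in $U_1\otimes V_1$, hence diagonal, and permutation matrices (with one factor set to the identity) equalize the diagonal entries. You instead decouple the two tensor slots from the start: setting $U=I_m$ makes each $n\times n$ block of $W$ commute with every unitary $V$, so Lemma \ref{commute2} gives $W=C\otimes I_n$; setting $V=I_n$ then gives $UCU^\dagger=C$ for all unitary $U$, and a second application of Lemma \ref{commute2} yields $C=cI_m$, with $c\geq 0$ from positivity. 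Your version is more elementary — it avoids the irrationality/distinct-eigenvalue construction and the polynomial argument entirely, never needs positivity until the final sign statement, and extends verbatim to any number of independently chosen unitary factors. What the paper's route buys is the explicit mechanism it wants to emphasize: the existence of a tensor product with simple spectrum is exactly what fails for $U\otimes U$ and $U\otimes U^\dagger$, whereas in your argument the same obstruction shows up as the impossibility of freezing one factor to the identity while varying the other. Both are valid; just note (as you did) that the legitimacy of specializing $U$ or $V$ to the identity rests only on the identity being unitary, and that the block partition must follow the ordering of the tensor factors.
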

\begin{proof} 
Let $U_1=\begin{pmatrix}
        e^{i \frac{2\pi}{m}} & & & &\\
        &\ddots& & &\\
        & &e^{i \frac{2k\pi}{m}}& &\\
        & & &\ddots& &\\
        & & & & 1&\\
\end{pmatrix}, V_1=\begin{pmatrix}
        e^{i \frac{2\pi}{a}} & & & &\\
        &\ddots& & &\\
        & &e^{i \frac{2k\pi}{a}}& &\\
        & & &\ddots& &\\
        & & & & e^{i \frac{n\pi}{a}}&\\
\end{pmatrix}$, where $a>n$ and $\frac{a}{m}$ is irrational. Then the eigenvalues of $U_1 \otimes V_1$ are $e^{i (\frac{2\pi j}{m}+\frac{2\pi k}{a})},\forall j\in\{1,\cdots ,m\},k\in\{1,\cdots ,n\}$. It is evident that each eigenvalue is simple: If there exists an eigenvalue with algebraic multiplicity greater than 1, then $\exists j_1,k_1,j_2,k_2\in \mathbb{N}\quad s.t. \frac{2\pi j_1}{m}+\frac{2\pi k_1}{a}=\frac{2\pi j_2}{m}+\frac{2\pi k_2}{a}, a.e.\frac{a}{m}=\frac{k_2-k_1}{j_1-j_2}\in \mathbb{Q},$  which would imply a contradiction.\\
Applying Lemma \ref{commute1}, there exists a polynomial \( p(x) \) of degree at most \( mn - 1 \) such that \(W=p(U \otimes V) \). From this, it follows that \( W \) is a diagonal matrix.\\
Now let \( V = I_n \) and let \( U \) traverse all \( m \)-th order permutation matrices; then let \( U = I_m \) and let \( V \) traverse all $n \times n$ permutation matrices. By combining these results, we obtain \( w_{ii}=w_{jj},\forall i,j\in \{1,\cdots,mn\} \), and thus \( W=cI_{mn} \). Since \( W \) is a positive semidefinite matrix, it follows that $cI_{mn}(c\geq 0)$.
\end{proof}

It should be noted that the most crucial prerequisite for the validity of theorem \ref{uv} is the feasibility to find $U\otimes V$ which has $n$ distinct eigenvalues. This is not achievable in the cases of $U\otimes U$ and $U\otimes U^\dagger$.Next we discuss the commutant structure of the operator \( U \otimes U \otimes U^\dagger \), specifically aiming to find a positive semifinite matrix W that satisfies the equation for every unitary matrix U, which is given by
\begin{equation}
    (U \otimes U \otimes U^\dagger)W(U \otimes U \otimes U^\dagger)^\dagger=W.
\label{3d}
\end{equation}

The former study has shown the general form of $W$ that commutes with $U\otimes U$ \cite{werner1989quantum}. Using Lemma \ref{commute3}, we can derive partial solutions to the equation \eqref{3d} as follows.
\begin{theorem}
    A positive semidefinite solution to \eqref{3d} can be written as the following form, where $x, y \in \mathbb{R},\,  \text{and} \, F_n = \sum_{i, j=1}^{n} (|i\rangle \langle j|) \otimes (|j\rangle \langle i|) $.
\begin{equation}
    \tilde{W} = xI_{n^3} + yF_n\otimes I_n.
\label{3d_sol}
\end{equation}
\qed
\end{theorem}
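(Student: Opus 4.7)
The objective is to establish that every positive semidefinite $W$ satisfying (\ref{3d}) admits the two-parameter form $xI_{n^3}+yF_n\otimes I_n$. My plan is to lift the block-matrix / unitary-substitution technique of Section~\ref{sec:commutant} to three tensor slots, following the roadmap of Theorem~\ref{1112} through Theorem~\ref{1112=0} indicated in the introduction.

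First, I would partition $W$ along the third tensor factor, writing $W=\sum_{k,l=1}^{n} \tilde W_{k,l}\otimes E_{k,l}$ with each $\tilde W_{k,l}$ an $n^2\times n^2$ matrix acting on the first two slots. Substituting $U=P_{i,j}$ in (\ref{3d}) produces relations $\tilde W_{k,l}=\Pi\,\tilde W_{k',l'}\,\Pi^{\dagger}$ for appropriate permutation conjugations $\Pi$, directly mirroring Lemma~\ref{le:}. This reduces the problem to determining a representative diagonal block $\tilde W_{1,1}$ and one off-diagonal block $\tilde W_{1,2}$.

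For $\tilde W_{1,1}$, invariance under $U=\diag(1,V)$ combined with Werner's theorem~\cite{werner1989quantum} applied to the first two factors forces $\tilde W_{1,1}=xI_{n^2}+yF_n$ for real scalars $x,y$. For $\tilde W_{1,2}$, I would mimic the argument in Lemma~\ref{diag2}: substitute diagonal phase unitaries $U=\diag(e^{i\theta_1},\ldots,e^{i\theta_n})$ with generic phases to zero out all entries except a short resonant list, then use the $2\times 2$-block unitaries $U_1,U_2,U_3$ from the proof of Lemma~\ref{diag2} to extract sign constraints that annihilate the remaining candidates, and finally eliminate any residual entry by a substitution analogous to the closing averaging step of Lemma~\ref{diag1}. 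This yields $\tilde W_{1,2}=\mathbf{O}$, and reassembling via the orbit relations gives $W=(xI_{n^2}+yF_n)\otimes I_n=xI_{n^3}+yF_n\otimes I_n$. Positive semidefiniteness then follows from the spectrum $x\pm y$ of $xI_{n^2}+yF_n$, giving the criterion $x\geq|y|$.

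The principal obstacle is the vanishing of the off-diagonal block $\tilde W_{1,2}$. The commutant of $U\otimes U\otimes U^{\dagger}$ \emph{a priori} contains contraction-type invariants coupling the third slot with one of the first two slots (essentially maximally entangled projectors across pairs of slots), and singling out only the generators $I$ and $F_n\otimes I$ requires non-trivial unitary substitutions that distinguish these from the contraction invariants. The bookkeeping of which matrix entries survive each substitution is substantially more intricate than in the bipartite case of Section~\ref{sec:commutant}, and this is where I expect the bulk of the technical work to lie.
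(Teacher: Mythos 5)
You have targeted the wrong direction of the statement. The theorem in question is introduced by the sentence ``we can derive \emph{partial} solutions to the equation \eqref{3d} as follows'' and carries no written proof: its content is only that every matrix of the form \eqref{3d_sol} \emph{is} a solution of \eqref{3d}. The paper's implicit justification is one line: by Werner's result \cite{werner1989quantum}, $xI_{n^2}+yF_n$ commutes with $U\otimes U$, and $I_n$ commutes with $U^\dagger$, so by Lemma \ref{commute3} the matrix $\tilde W=xI_{n^3}+yF_n\otimes I_n$ commutes with $U\otimes U\otimes U^\dagger$; positive semidefiniteness for suitable $x,y$ follows from the spectrum of $F_n$. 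Your proposal never performs this (easy) verification. Instead it outlines the converse --- that every positive semidefinite solution has this form --- which is the content of the remainder of the section (Lemmas \ref{simplify-1} and \ref{3d-diag}, Theorems \ref{1112} and \ref{1112=0}, and the final theorem), not of this one.

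Even read as a sketch of that harder completeness statement, the proposal has genuine gaps at exactly the points you flag. First, partitioning along the third slot, the choice $U=\diag(1,V)$ only makes $\tilde W_{1,1}$ invariant under $V\otimes V$ with $V$ ranging over unitaries fixing $|1\rangle$; the commutant of that proper subgroup is strictly larger than $\mathrm{span}\{I_{n^2},F_n\}$ (it contains, e.g., $E_{11}\otimes E_{11}$), so Werner's theorem cannot be invoked as stated. Second, for the off-diagonal block, diagonal phase unitaries $\diag(e^{i\theta_1},\dots,e^{i\theta_n})$ preserve every matrix element $\langle a,b,c|W|a',b',c'\rangle$ with $\{a,b,c'\}=\{a',b',c\}$ as multisets; with $c=1$, $c'=2$ this resonant set contains on the order of $n$ contraction-type entries such as $\langle 1,b,1|W|2,b,2\rangle$ and $\langle a,1,1|W|a,2,2\rangle$, and it is not established that the three unitaries $U_1,U_2,U_3$ of Lemma \ref{diag2} annihilate them all. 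The paper's actual route is structurally different: it partitions along the \emph{first} factor, where the off-diagonal block $W_{1,2}$ is \emph{not} zero (it carries the block $yI_n$), and it must couple the analyses of $W_{1,1}$ and $W_{1,2}$ through a Hadamard-type unitary in Theorem \ref{1112=0} before Theorem \ref{1112} reassembles the whole matrix. So the proposal both misses the (trivial) statement actually being asserted here and leaves the substantive steps of the classification unproved.
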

We proceed to analyze in detail the structure of the matrix in equation \eqref{3d}. To facilitate the exposition, we adopt notation analogous to that in Section~2. For an $n^3 \times n^3$ matrix $W$, we partition it into 
 $n \times n$ block matrices $\{W_{i,j}\}_{i,j=1}^n$, where each block $W_{i,j}$ is of size $n^2 \times n^2$. We further subdivide $W_{1,1}$ and $W_{1,2}$ into $n \times n$ subblocks $\{\mathbf{M}_{ij}\}_{i,j=1}^n$ and $\{\mathbf{N}_{ij}\}_{i,j=1}^n$, each 
 $\mathbf{M}_{ij},\mathbf{N}_{i,j}$ being an $n \times n$ matrix. Let $W'$ denote the simplified form of $W$ obtained by substituting some unitary matrices in equation \eqref{3d}. We have the following theorem:

\begin{theorem}
\label{1112}
 If $W'_{1,1} = \tilde{W}_{1,1}$ and $W'_{1,2} = \tilde{W}_{1,2}$, then equation \eqref{3d_sol} constitutes the complete solution set.
\end{theorem}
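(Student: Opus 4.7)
The plan is to leverage the fact that every permutation matrix $P_\sigma$ is unitary, so $U = P_\sigma$ is a valid substitution in \eqref{3d}, and to show that once the ``seed'' blocks $W'_{1,1}$ and $W'_{1,2}$ are fixed to coincide with those of $\tilde W$, the entire remaining block structure of $W'$ is forced by the action of $S_n$ on the first tensor leg. I would begin by writing $W' = \sum_{i,j=1}^n E_{ij} \otimes W'_{i,j}$, with each $W'_{i,j}$ of size $n^2 \times n^2$, and substitute $U = P_\sigma$ (the permutation matrix sending $|i\rangle$ to $|\sigma(i)\rangle$) into \eqref{3d}. Using $P_\sigma E_{ij} P_\sigma^\dagger = E_{\sigma(i),\sigma(j)}$ and $P_\sigma^\dagger = P_{\sigma^{-1}}$, and then equating the coefficients of $E_{kl}$ on both sides, yields the block transformation rule
\begin{equation*}
W'_{\sigma(i),\sigma(j)} \;=\; (P_\sigma \otimes P_{\sigma^{-1}})\, W'_{i,j}\, (P_{\sigma^{-1}} \otimes P_\sigma).
\end{equation*}

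From this rule I would extract two consequences. For any diagonal index $i$, choosing $\sigma$ with $\sigma(1)=i$ expresses $W'_{i,i}$ in terms of $W'_{1,1}$; for any pair $i \neq j$, choosing $\sigma$ with $\sigma(1)=i$ and $\sigma(2)=j$ (always possible for $n\geq 2$) expresses $W'_{i,j}$ in terms of $W'_{1,2}$. Hence $W'$ is determined entirely by its $(1,1)$ and $(1,2)$ blocks. Since $\tilde W$ was already shown to satisfy \eqref{3d}, the identical transformation rule governs the blocks $\tilde W_{i,j}$. Combining this with the hypotheses $W'_{1,1} = \tilde W_{1,1}$ and $W'_{1,2} = \tilde W_{1,2}$ gives $W'_{i,j} = \tilde W_{i,j}$ for every pair $(i,j)$, so $W' = \tilde W$, and \eqref{3d_sol} is therefore the complete solution set.

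The step I expect to be the most delicate is the careful bookkeeping when deriving the block transformation rule, since the three-fold tensor structure forces one to track which copy of $P_\sigma$ versus $P_{\sigma^{-1}}$ acts on which leg inside each sub-matrix $W'_{i,j}$ — in particular the third leg, carrying $U^\dagger$, picks up $P_{\sigma^{-1}}$ rather than $P_\sigma$. Beyond this combinatorial bookkeeping, no deeper algebraic obstruction is expected; the positivity constraints on the scalars $x, y$ appearing in \eqref{3d_sol} are then inherited directly from the positive semidefiniteness of $W'$ together with the matching of the two seed blocks.
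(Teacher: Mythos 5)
Your proposal is correct, and the covariance rule you derive, $W'_{\sigma(i),\sigma(j)} = (P_\sigma \otimes P_{\sigma^{-1}})\, W'_{i,j}\, (P_{\sigma^{-1}} \otimes P_\sigma)$, is exactly the right consequence of substituting permutation matrices into \eqref{3d}; however, the way you identify the propagated blocks with those of $\tilde{W}$ differs genuinely from the paper. The paper, after relating each off-diagonal block $W^{0}_{i,j}$ of a solution $W_0$ to $W^{0}_{1,2}$ by permutation conjugation, pins down $W^{0}_{i,j}=t\tilde{W}_{i,j}$ through a counting argument: by Lemma \ref{centrosymmetric} the block $t\tilde{W}_{1,2}$ has exactly $n$ nonzero entries, and if some $W^{0}_{i,j}$ were nonzero where $\tilde{W}_{i,j}$ vanishes, the auxiliary solution $W_0+\tilde{W}$ would have an off-diagonal block whose number of nonzero entries cannot equal $n$; the diagonal blocks are then handled separately by computing $(P_{1i}\otimes P_{1i})\tilde{W}_{1,1}(P_{1i}\otimes P_{1i})$ explicitly from the structure of $F_n$. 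You instead use the single observation that $\tilde{W}$ is itself a solution of \eqref{3d}, hence obeys the same $S_n$-covariance rule, so the hypothesised equality of the two seed blocks is transported verbatim to every block by conjugating both sides with the same $P_\sigma\otimes P_{\sigma^{-1}}$. This buys uniformity (diagonal and off-diagonal blocks are treated identically), dispenses with Lemma \ref{centrosymmetric} and the entry-counting step, and sidesteps a point the paper leaves implicit, namely that cancellations in $W^{0}_{i,j}+\tilde{W}_{i,j}$ could in principle alter the count of nonzero entries. The one thing to keep explicit in a final write-up is that the hypothesis furnishes the two seed blocks with a common parameter pair $x,y$ (equivalently, a common scalar $t$ as in the paper's proof), which the application in Theorem \ref{1112=0} does guarantee, so that one and the same $\tilde{W}$ enters both conjugations; with that stated, your argument is complete.
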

\begin{proof}
    If $W_0$ is a solution of \eqref{3d},We partition it into $n \times n$ blocks {$W_{i,j}^{0}$}. Since (1,1)-block and (1,2)-block have already been simplified to $\tilde{W}_{11},\tilde{W}_{12}$, we have $W_{11}^0=t\tilde{W}_{11},W_{1,2}^{0}=t\tilde{W}_{12},t \in C$. The case that $t=0$ is trivial, so we can assume that $t\neq 0$. By employing the method of Lemma \ref{le:} and substituting \( P_{ij}\otimes P_{ij} \) for \( P_{ij} \), we can deduce that $\forall i\neq j$ , \( W_{ij}^0 = (P\otimes P)W_{12}^0(P\otimes P)  \), where $P$ is the matrix products of a series of permutation matrices.

Now we consider \(  W_{ij}^{0}, \forall i\neq j \). According to Lemma \ref{centrosymmetric}, \( W_{12}^0=t\tilde{W}_{1,2}\) has precisely \( n \) identical nonzero elements. Since each \( W_{ij}^0\) is obtained by permuting the elements of \( W_{12}^0 \),  each \(  W_{ij}^0 \) has precisely \( n \) identical nonzero elements. We state a claim that \( W_{ij}^0 = t\tilde{W}_{ij} \):\\
If this claim is invalid, due to that both \( W_{ij}^0 \) and \( \tilde{W}_{ij} \) possess only \( n \) identical nonzero elements, there must exist a position $(k,l)$, where the $(k,l)-th$ element of \( W_{ij}^0 \) is nonzero while the $(k,l)-th$ element of \( \tilde{W}_{ij} \) is 0. Meanwhile, we note that \( W^{''} = W_{0} +\tilde{W}  \) constitutes a solution to the equation, since both \( W_{0} \) and \(\tilde{W} \) are solutions. At this juncture, the off-diagonalmatrix blocks of \( W^{''} \) satisfy \(W^{''}_{ij} =W_{ij}^0+\tilde{W}_{ij}  \), and the number of the nonzero elements of \( W^{''}_{ij} \) can never be \( n \), which leads to the contradiction.Thus \( W_{ij}^0=t\tilde{W}_{ij} \).
\\Let $U=P_{ij},1\leq i,j \leq n$ in equation \eqref{3d}, we have \( W_{ii}^0 =(P_{1j}\otimes P_{1j})W_{11}^0(P_{1j}\otimes P_{1j})   \).Since \( W_{11}^{'}= \tilde{W}_{11}\) ,\( W_{11}^0=t\tilde{W}_{11} \), we have \( W_{ii}^0 =(P_{1i}\otimes P_{1i})W_{11}^0(P_{1i}\otimes P_{1i})   \)=$t (P_{1i}\otimes P_{1i})\tilde{W}_{11}(P_{1i}\otimes P_{1i})$=t$\tilde{W}_{ii}$,where the last equal sign is derived by lemma \eqref{centrosymmetric}. Combine the results $W_{ii}^0 =\tilde{W}_{ii}$ and $W_{ij}^0 = t\tilde{W}_{ij}(\forall i\neq j)$ together,we have $W_0=t\tilde{W}$,so $\tilde{W}$ constitutes the complete solution set to equation \eqref{3d}.
\end{proof}
\begin{lemma}
\label{simplify-1}
The (1,2) $n^2\times n^2$ matrix block \( W_{1,2} \) can be reduced to a form where only the (2,1)-block \( \mathbf{N}_{2,1} \) is nonzero, with \( \mathbf{N}_{2,1} =  \begin{pmatrix}
        m_{11}& &   \\
       & m_{22} & \\
        &  & m_{33}\\
    \end{pmatrix}\oplus\begin{pmatrix}
        m_{44}& \cdots& m_{4n}  \\
        \vdots& \vdots & \vdots\\
        m_{n4}&  \cdots& m_{nn}\\
    \end{pmatrix}\).
\label{3d-nondiag}
\end{lemma}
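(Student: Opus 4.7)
The approach is to substitute several carefully chosen families of unitary matrices into equation \eqref{3d} and read off the resulting constraints on the $n \times n$ subblocks $\mathbf{N}_{ij}$ of $W_{1,2}$. The first step is to substitute the diagonal unitaries $U = \text{diag}(e^{i\theta_1}, e^{i\theta_2}, e^{i\theta_3}, 1, \ldots, 1)$ with $\theta_1, \theta_2, \theta_3$ rationally independent. Since $U \otimes U \otimes U^\dagger$ acts diagonally with phase $e^{i(\theta_a + \theta_b - \theta_c)}$ on $|a,b,c\rangle$ (taking $\theta_i = 0$ for $i \geq 4$), invariance of the entry $W_{(1,b,c),(2,b',c')}$, which sits at position $(c, c')$ of $\mathbf{N}_{b,b'}$, forces the multisets $\{1, b, c'\}$ and $\{2, b', c\}$ to agree in the multiplicities of each of the indices $1, 2, 3$.

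A case analysis on this condition determines the allowed support. Inside $\mathbf{N}_{2,1}$ the constraint reduces to the statement that $c$ and $c'$ agree on whether each of $1, 2, 3$ is attained, whose solutions are precisely the three positions $(1,1), (2,2), (3,3)$ in the top-left $3 \times 3$ corner together with the entire bottom-right $(n-3) \times (n-3)$ block indexed by $c, c' \in \{4, \ldots, n\}$ — exactly the stated form. For the remaining subblocks the same analysis leaves only a short, explicit list of stray candidate entries, typically at position $(1, 2)$ or a closely related one in $\mathbf{N}_{1,1}$, $\mathbf{N}_{2,2}$, $\mathbf{N}_{k,1}$, $\mathbf{N}_{2,k}$, or $\mathbf{N}_{k,k}$ for $k \geq 3$.

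The second step eliminates these stray entries. Substitute permutation matrices $P_{i,j}$ that fix the outer pair $(1,2)$ — the transpositions among coordinates $\geq 3$, and mixtures of $P_{1,k}, P_{2,k}$ that relate a stray subblock back to $\mathbf{N}_{2,1}$ — together with phase-and-swap matrices in the spirit of $U_1, U_2, U_3$ from the proof of Lemma \ref{diag2}. Each such substitution, through the threefold conjugation $U \otimes U \otimes U^\dagger$, expresses a stray entry of $W_{1,2}$ as another position that Step~1 has already ruled out, forcing it to vanish. Iterating across the finite list of candidates leaves $\mathbf{N}_{2,1}$ as the only nonzero subblock, with precisely the block-diagonal structure claimed.

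The main obstacle is the bookkeeping in Step~2: a single non-diagonal substitution couples many subblocks of $W_{1,2}$ simultaneously, and the sequence of substitutions must be orchestrated so that each new identity kills its intended stray entry without reintroducing one that was already eliminated. Keeping visible the asymmetry between the two $U$'s and the one $U^\dagger$ — which is ultimately why $\mathbf{N}_{2,1}$ survives rather than a symmetric collection of subblocks — is the guiding principle throughout.
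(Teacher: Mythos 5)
Your Step~1 is sound: with $U=\mathrm{diag}(e^{i\theta_1},e^{i\theta_2},e^{i\theta_3},1,\dots,1)$ the entry of $W$ at row $(1,b,c)$, column $(2,b',c')$ acquires the phase $\theta_1+\theta_b-\theta_c-\theta_2-\theta_{b'}+\theta_{c'}$, and the resulting multiset condition does carve out exactly the claimed support inside $\mathbf{N}_{2,1}$. (In fact, using all $n$ independent phases would give you more: it already forces $\mathbf{N}_{2,1}$ to be diagonal, which still implies the stated form.) But the surviving ``stray'' set is larger than your list suggests -- besides $(1,2)$ in $\mathbf{N}_{1,1}$, $\mathbf{N}_{2,2}$ and $\mathbf{N}_{k,k}$, it contains $(1,2)$ in every $\mathbf{N}_{k,l}$ with $k,l\ge 4$, the entries $(1,c')$, $c'\ge 4$, of each $\mathbf{N}_{2,b'}$ with $b'\ge 4$, the entries $(c,2)$, $c\ge 4$, of each $\mathbf{N}_{b,1}$ with $b\ge 4$, etc. -- and, more importantly, Step~2 is where the proof actually lives, and as described it does not close. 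The mechanism you assert (``each substitution expresses a stray entry as a position Step~1 has already ruled out'') fails for the strays that matter: the substitution $U=P_{1,2}$ sends $\mathbf{N}_{2,2}(1,2)=W_{(1,2,1),(2,2,2)}$ to $W_{(2,1,2),(1,1,1)}$, which by Hermiticity is $\overline{W_{(1,1,1),(2,1,2)}}$, i.e.\ the stray $\mathbf{N}_{1,1}(1,2)$; a transposition among indices $\ge 3$ sends $\mathbf{N}_{2,3}(1,3)$ to $\mathbf{N}_{2,k}(1,k)$, again a surviving candidate. So the permutation-type substitutions you list merely permute the strays among themselves and never force a zero.

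What does eliminate them -- and what the paper's proof is built on -- is keeping free phases on the swaps. With $U$ equal to the $(1,2)$ swap decorated with phases $e^{i\theta_1},e^{i\theta_2}$ one gets, for every choice of $\theta_1,\theta_2$, relations such as $W_{(1,2,1),(2,2,2)}=e^{2i(\theta_2-\theta_1)}\,W_{(2,1,2),(1,1,1)}$; since the left side is fixed while the phase varies, both entries vanish. The paper organizes the whole argument around two such phase-decorated swaps (on $(1,2)$ and on $(1,3)$), first relating $W_{1,2}$ to $W_{2,1}$ and $W_{2,3}$ via $U=P_{1,2}$, then deriving for each subblock an equation of the form $\mathbf{N}_{ij}=k\,U^\dagger P\,\mathbf{N}_{ij}\,P\,U$ with an $(i,j)$-dependent phase $k$, and intersecting the two resulting constraint sets; no separate stray-elimination pass is needed, and no purely diagonal substitution is used for $W_{1,2}$ at all. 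Your outline names the right tools (``in the spirit of $U_1,U_2,U_3$'') but never extracts these phase relations, and without them the elimination of the strays is only asserted; as written, the proposal establishes the Step~1 support but not the lemma.
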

\begin{proof}
   Let $U=P_{12}$ in equation \eqref{3d}. Mentioning that $U\otimes U\otimes U^\dagger=\begin{pmatrix}
        \mathbf{O} & P_{1,2}\otimes P_{1,2} & \mathbf{O}&\mathbf{O}  \\
        P_{1,2}\otimes P_{1,2}& \mathbf{O} & \mathbf{O}&\mathbf{O}\\
        \mathbf{O}& \mathbf{O}& P_{1,2}\otimes P_{1,2}& \mathbf{O}\\
        \mathbf{O}& \mathbf{O}& \mathbf{O}&  \ddots\\
    \end{pmatrix}$ ,we can just replace $P_{1,2}\otimes  P_{12}$ with $P_{1,2}$ in \eqref{e12_pre} in equation \ref{e12_pre}, and get 
\begin{equation}
W_{1,2}=(P_{1,2}\otimes P_{1,2})W_{2,1}(P_{1,2}\otimes P_{1,2})=(P_{1,3}\otimes P_{1,3})W_{23}(P_{1,3}\otimes P_{1,3})
\label{3d_w12}
\end{equation}
Now let $U=\begin{pmatrix}
           & e^{i\theta_2} &   &   & \\
        e^{i\theta_1}&    &   & & \\
          &   & 1&  & \\
          &   &   &  \ddots&\\
          &   &   &  &1
    \end{pmatrix}$ in equation \eqref{3d}. For $W_{1,2}$, focusing on the (1,2)-block in $(U \otimes U \otimes U^\dagger)W(U \otimes U \otimes U^\dagger)^\dagger$ and $W$, we have 
\begin{equation}
\begin{array}{rcl}
W_{1,2} &=& \begin{pmatrix}
\mathbf{N}_{11} & \mathbf{N}_{12} & \mathbf{N}_{13} & \cdots & \mathbf{N}_{1n} \\
\mathbf{N}_{21} & \mathbf{N}_{22} & \mathbf{N}_{23} & \cdots & \mathbf{N}_{2n} \\
\mathbf{N}_{31} & \mathbf{N}_{32} & \mathbf{N}_{33}& \cdots & \mathbf{N}_{3n} \\
\vdots & \vdots & \vdots & \ddots & \vdots \\
\mathbf{N}_{n1} & \mathbf{N}_{n2} & \mathbf{N}_{n3} & \cdots & \mathbf{N}_{nn} \\
\end{pmatrix} =e^{i(\theta_2-\theta_1)}(U\otimes U^\dagger)(P_{1,2}\otimes P_{1,2})W_{1,2}(P_{1,2}\otimes P_{1,2})(U\otimes U^\dagger)^{\dagger}\\
&{=}& e^{i(\theta_2-\theta_1)} \begin{pmatrix}
U^\dagger P_{1,2}\mathbf{N}_{11}P_{1,2}U & e^{i(\theta_2-\theta_1)}U^\dagger P_{1,2}\mathbf{N}_{12}P_{1,2}U & e^{i\theta_2}U^\dagger P_{1,2}\mathbf{N}_{13}P_{1,2}U & \cdots & e^{i\theta_2}U^\dagger P_{1,2}\mathbf{N}_{1n}P_{1,2}U \\
e^{i(\theta_1-\theta_2)}U^\dagger P_{1,2}\mathbf{N}_{21}P_{1,2}U & U^\dagger P_{1,2}\mathbf{N}_{22}P_{1,2}U & e^{i\theta_1}U^\dagger P_{1,2}\mathbf{N}_{23}P_{1,2}U & \cdots & e^{i\theta_1}U^\dagger P_{1,2}\mathbf{N}_{2n}P_{1,2}U \\ 
e^{-i\theta_2}U^\dagger P_{1,2}\mathbf{N}_{31}P_{1,2}U & e^{-i\theta_1}U^\dagger P_{1,2}\mathbf{N}_{32}P_{1,2}U & U^\dagger P_{1,2}\mathbf{N}_{33}P_{1,2}U & \cdots & U^\dagger P_{1,2}\mathbf{N}_{3n}P_{1,2}U \\
\vdots & \vdots & \vdots & \ddots & \vdots \\
e^{-i\theta_2}U^\dagger P_{1,2}\mathbf{N}_{n1}P_{1,2}U & e^{-i\theta_1}U^\dagger P_{1,2}\mathbf{N}_{n2}P_{1,2}U & U^\dagger P_{1,2}\mathbf{N}_{n3}P_{1,2}U & \cdots & U^\dagger P_{1,2}\mathbf{N}_{nn}P_{1,2}U \\
\end{pmatrix}
\end{array}.
\label{e12-3d}
\end{equation}
Each \( \mathbf{N}_{ij} \) is the solution to the following matrix equation for \( A=(a_{ij})_{i,j=1}^n \):
\begin{equation}
    A=kU^\dagger P_{12}AP_{12}U=k\begin{pmatrix}
a_{11} & e^{i(\theta_2-\theta_1)}a_{12} & e^{-i\theta_1}a_{13} & \cdots & e^{-i\theta_1}a_{1n} \\
e^{i(\theta_1-\theta_2)}a_{21} & a_{22} & e^{-i\theta_2}a_{23} & \cdots & e^{-i\theta_2}a_{2n} \\
e^{i\theta_1}a_{31} & e^{i\theta_2}a_{32} & a_{33} & \cdots & a_{3n} \\
\vdots & \vdots & \vdots & \ddots & \vdots \\
e^{i\theta_1}a_{n1} & e^{i\theta_2}a_{n2} & a_{n3} & \cdots & a_{nn} \\
\end{pmatrix},k\in \mathbf{C}.
\label{A10}
\end{equation}
Note that in $W_{1,2}$, the number $k = \begin{cases} 
e^{i(\theta_2-\theta_1)} &(i,j)=(1,1)\text{ and }(2,2) \text{ and } i\geq 3,j\geq 3 \\
e^{2i(\theta_2-\theta_1)}& i=1,j=2,\\
1& i=2,j=1,\\
e^{i(2\theta_2-\theta_1)}& i=1,j\geq 3,\\
e^{i\theta_2}& i=2,j\geq 3,\\
e^{-i\theta_1}& i\geq3,j=1,\\
e^{i(\theta_2-2\theta_1)}& i\geq3,j=2,\\
\end{cases}$ we have 
$\mathbf{N}_{i,j}= \begin{cases} 
k_{ij}E_{21} &(i,j)=(1,1)\text{ and }(2,2) \text{ and } i\geq 3,j\geq 3 \\
\mathbf{O}& i=1,j\geq2 \text{ and } i\geq3,j=2\\
X_{21}& i=2,j=1,\\
Y_{ij}& i=2,j\geq 3,\\
Z_{ij}& i\geq3,j=1,\\
\end{cases}$
according to equation \eqref{A10} and the arbitrariness of $\theta_1,\theta_2$. The elements from the third to the \( n \)-th in the second row of \( Y_{ij} \) are nonzero, and the elements from the third to the \( n \)-th in the first column of \( Z_{ij} \) are nonzero.Meanwhile, \( X_{2,1}  = diag\{m_{11},m_{22}\begin{pmatrix}
        m_{33}& \cdots& m_{3n}  \\
        \vdots& \vdots & \vdots\\
        m_{n3}&  \cdots& m_{nn}\\
    \end{pmatrix}\} \). 
Now let $U=\begin{pmatrix}
           &  &  e^{i\theta_2} &   & \\
        &  1  &   & & \\
         e^{i\theta_1} &   & &  & \\
          &   &   &  \ddots&\\
          &   &   &  &1
    \end{pmatrix}$, similarly we have
\begin{equation}
\begin{array}{rcl}
W_{1,2} &=&  e^{i\theta_1}(P_{13}\otimes P_{13})(U\otimes U^\dagger)W_{1,2}(U\otimes U^\dagger)^{\dagger}(P_{13}\otimes P_{13})\\
&{=}& e^{i\theta_1} \begin{pmatrix}
P_{1,3}U^\dagger \mathbf{N}_{11}UP_{1,3} & e^{i\theta_1}P_{1,3}U^\dagger \mathbf{N}_{12}UP_{1,3} & e^{i(\theta_1-\theta_2)}P_{1,3}U^\dagger \mathbf{N}_{13}UP_{1,3} & \cdots & e^{i\theta_1}P_{1,3}U^\dagger \mathbf{N}_{1n}UP_{1,3} \\
e^{-i\theta_1}P_{1,3}U^\dagger \mathbf{N}_{21}UP_{1,3} & P_{1,3}U^\dagger \mathbf{N}_{22}UP_{1,3} & e^{-i\theta_2}P_{1,3}U^\dagger \mathbf{N}_{23}UP_{1,3} & \cdots & P_{1,3}U^\dagger \mathbf{N}_{2n}UP_{1,3} \\ 
e^{i(\theta_2-\theta_1)}P_{1,3}U^\dagger \mathbf{N}_{31}UP_{1,3} & e^{i\theta_2}P_{1,3}U^\dagger \mathbf{N}_{32}UP_{1,3} & P_{1,3}U^\dagger \mathbf{N}_{33}UP_{1,3} & \cdots & e^{i\theta_2}P_{1,3}U^\dagger \mathbf{N}_{3n}UP_{1,3} \\
\vdots & \vdots & \vdots & \ddots & \vdots \\
e^{-i\theta_1}P_{1,3}U^\dagger \mathbf{N}_{n1}UP_{1,3} & P_{1,3}U^\dagger \mathbf{N}_{n2}UP_{1,3} & e^{-i\theta_2}P_{1,3}U^\dagger \mathbf{N}_{n3}UP_{1,3} & \cdots & P_{1,3}U^\dagger \mathbf{N}_{nn}UP_{1,3}  \\
\end{pmatrix}
\end{array}.
\label{e12-3d}
\end{equation}
Each \( \mathbf{N}_{ij} \) is the solution to the following matrix equation for \( C=(c_{ij})_{i,j=1}^n \):
\begin{equation}
    C=kP_{1,3}U^\dagger CUP_{1,3}=\begin{pmatrix}
m_{11} & e^{-i\theta_2}m_{12} & e^{i(\theta_1-\theta_2)}m_{13} & \cdots & e^{-i\theta_2}m_{1n} \\
e^{i\theta_2}m_{21} & m_{22} & e^{i\theta_1}m_{23} & \cdots & m_{2n} \\
e^{i(\theta_2-\theta_1)}m_{31} & e^{-i\theta_1}m_{32} & m_{33} & \cdots & e^{-i\theta_1}m_{3n} \\
\vdots & \vdots & \vdots & \ddots & \vdots \\
e^{i\theta_2}m_{n1} &a_{n2} & e^{i\theta_1}m_{n3} & \cdots & m_{nn} \\
\end{pmatrix},k\in \mathbf{C}.
\label{A}
\end{equation}
We have $k = \begin{cases} 
e^{i\theta_1} & (i\neq 1,3\text{ and }\\&  j\neq 1,3)\text{ and }i= j=1,3 \\
e^{i(2\theta_1-\theta_2)}& i=1,j=3,\\
e^{i\theta_2}& i=3,j=1,\\
1& j=1,i\neq 1,3,\\
e^{i(\theta_1-\theta_2)}& j=3,i\neq 1,3,\\
e^{i2\theta_1}& i=1,j\neq 1,3,\\
e^{i(\theta_1+\theta_2)}& i=3,j\neq 1,3,\\
\end{cases}$ and 
$\mathbf{N}_{ij} = \begin{cases} 
S_{ij} & (i\neq 1,3\text{ and }\\&  j\neq 1,3)\text{ and }i= j=1,3 \\
\mathbf{O}& i=1,j=3,\\
T_{ij}& i=3,j=1,\\
G_{ij}& j=1,i\neq 1,3,\\
Q_{ij}& j=3,i\neq 1,3,\\
\mathbf{O}& i=1,j\neq 1,3,\\
\mathbf{O}& i=3,j\neq 1,3,\\
\end{cases}$
The elements of \( S_{ij} \) that may be nonzero are only those in the third row except for the first and third elements. The elements of \( T_{ij} \) that may be nonzero are only those in the first row except for the first and third elements. The elements of \( G_{ij} \) that may be nonzero include the (1,1) element, the (3,3) element, and any element not in the first or third row or column. The element of \( Q_{ij} \) that may be nonzero is only the (3,1) element. By comparing the two expressions for \( b_{ij} \), it is easy to see that only \( \mathbf{N}_{21} \) is a nonzero matrix block, and with \( \mathbf{N}_{2,1} =  \begin{pmatrix}
        m_{11}& &   \\
       & m_{22} & \\
        &  & m_{33}\\
    \end{pmatrix}\oplus\begin{pmatrix}
        m_{44}& \cdots& m_{4n}  \\
        \vdots& \vdots & \vdots\\
        m_{n4}&  \cdots& m_{nn}\\
    \end{pmatrix}\} \).
\end{proof}
\begin{lemma}
    The (1,1) $n^2\times n^2$ matrix block \( W_{1,1} \) can be reduced to $diag\{\mathbf{M}_{1,1},\mathbf{M}_{2,2},\cdots,\mathbf{M}_{2,2}\}$,where the $n\times n$ matrices $\mathbf{M}_{1,1}=diag\{a,t,\cdots,t\},\mathbf{M}_{2,2}=diag\{d,c,\cdots,c\}$ .
\label{3d-diag}
\end{lemma}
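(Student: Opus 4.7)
The plan is to mimic the substitution template of Lemma~\ref{simplify-1}: plug carefully chosen unitaries $U$ into equation~\eqref{3d} and read off the induced $(1,1)$-outer-block identity $W_{1,1}=(U\otimes U^\dagger)\,W_{1,1}\,(U\otimes U^\dagger)^\dagger$, which holds automatically whenever the first row of $U$ is $e_1^T$. I combine four ingredients: (i) a generic diagonal $U$ to force a sparsity pattern on the sub-blocks $\mathbf{M}_{ij}$; (ii) a block unitary $U=\mathrm{diag}(1,U_{n-1})$ to pin down $\mathbf{M}_{1,1}$ and to relate the part of $W_{1,1}$ living on $(i,k)\in\{2,\dots,n\}^2$ to a lower-dimensional commutant problem; (iii) Theorem~\ref{thm:diag2} applied with $n\mapsto n-1$ to collapse that part in one stroke; and (iv) a non-block-diagonal mixing unitary to kill the last residual scalars coming from $\mathbf{M}_{1,j}$.

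For (i)--(ii): with $U=\mathrm{diag}(1,e^{i\alpha_2},\dots,e^{i\alpha_n})$ and generic phases, conjugation multiplies the $(k,l)$-entry of $\mathbf{M}_{ij}$ by $e^{i(\alpha_i-\alpha_j-\alpha_k+\alpha_l)}$; matching each $\alpha_m$-coefficient to zero forces the multiset identity $\{i,l\}\cap\{2,\dots,n\}=\{j,k\}\cap\{2,\dots,n\}$. Hence each $\mathbf{M}_{ii}$ is diagonal and every $\mathbf{M}_{ij}$ with $i\neq j$ has at most one nonzero entry, at inner position $(i,j)$. Next, plugging $U=\mathrm{diag}(1,U_{n-1})$ and reading off the $(1,1)$-sub-block yields $\mathbf{M}_{1,1}=U^\dagger\mathbf{M}_{1,1}U$ for every unitary $U_{n-1}$, and combining with the sparsity pattern and Lemma~\ref{commute2} gives $\mathbf{M}_{1,1}=\mathrm{diag}(a,tI_{n-1})$.

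For (iii), still with $U=\mathrm{diag}(1,U_{n-1})$, decompose $\mathbb{C}^n\otimes\mathbb{C}^n$ into the four $U_{n-1}$-isotypic subspaces indexed by whether $(i,k)$ lies in $\{1\}^2$, $\{1\}\times\{2,\dots,n\}$, $\{2,\dots,n\}\times\{1\}$, or $\{2,\dots,n\}^2$; on these blocks $U\otimes U^\dagger$ acts as $1$, $U_{n-1}^\dagger$, $U_{n-1}$, and $U_{n-1}\otimes U_{n-1}^\dagger$ respectively. The restriction of $W_{1,1}$ to the last subspace is a commutant of $U_{n-1}\otimes U_{n-1}^\dagger$, so by Theorem~\ref{thm:diag2} it equals $cI_{(n-1)^2}$. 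Combined with the sparsity pattern this yields $\mathbf{M}_{jj}(k,k)=c$ for $j,k\geq 2$ and $\mathbf{M}_{ij}=\mathbf{O}$ for $i\neq j$ with $i,j\geq 2$, while Schur's lemma on the second and third subspaces pins $\mathbf{M}_{jj}(1,1)$ to a common value $d$ across $j\geq 2$, producing $\mathbf{M}_{jj}=\mathrm{diag}(d,c,\dots,c)$ uniformly for $j\geq 2$.

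The main obstacle is (iv): killing the surviving outer-blocks $\mathbf{M}_{1,j}$ ($j\geq 2$) and their Hermitian conjugates. A Schur argument on the trivial-to-$U_{n-1}\otimes U_{n-1}^\dagger$ intertwiner space (which is one-dimensional and spanned by $\sum_k|k\rangle\otimes|k\rangle$) forces the residual parameters $b_j$ from the sparsity step to satisfy $b_2=\cdots=b_n=b$, but no block-diagonal unitary $\mathrm{diag}(1,U_{n-1})$ can ever distinguish $b\ne 0$ from $b=0$. To produce the missing constraint I substitute a unitary that does \emph{not} fix $e_1$---for example a real rotation $R_\theta\oplus I_{n-2}$ between positions $1$ and $2$---into \eqref{3d} and expand the $(1,1)$-outer-block identity to first order in $\theta$, obtaining a relation of the form $W_{1,2}+W_{2,1}=[X,W_{1,1}]$ with $X$ the infinitesimal generator. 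Inserting the single-sub-block form of $W_{1,2}$ from Lemma~\ref{simplify-1} and matching components on the trivial subspace forces $b=0$; Hermiticity of $W$ then kills $\mathbf{M}_{j,1}$. This closing step, which ties the present lemma back to Lemma~\ref{simplify-1} through a non-block-diagonal unitary, is the delicate part of the argument.
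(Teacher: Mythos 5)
Your steps (i)--(iii) are fine, and compressing the $\{2,\dots,n\}^2$ corner onto Theorem~\ref{thm:diag2} (plus Lemma~\ref{commute2} on the other corners) is a legitimate shortcut compared with the paper's explicit phased permutations and Hadamard block. The genuine gap is in step (iv), and it begins with a representation-theoretic misidentification: on that corner the block unitary $1\oplus U_{n-1}$ acts as $U_{n-1}\otimes U_{n-1}^{\dagger}$, not $U_{n-1}\otimes U_{n-1}^{*}$, and the vector $\sum_{k\ge 2}|k\rangle\otimes|k\rangle$ is invariant only under the latter: $(U\otimes U^{\dagger})\sum_k|kk\rangle$ has coefficient matrix $U\bar U\neq I$ for non-symmetric $U$. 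Hence for $n\ge 3$ the trivial-to-corner intertwiner space is zero, not one-dimensional, and your assertion that no block-diagonal unitary can distinguish $b\neq 0$ from $b=0$ is false. Concretely, writing $b_k=(\mathbf{M}_{1k})_{1k}$, invariance of the $(1,1)$ outer block under $1\oplus U_{n-1}$ gives $b_j=\sum_{k\ge2}\overline{u_{jk}}\,u_{kj}\,b_k$; permutations force $b_2=\cdots=b_n=b$, and then any $U_{n-1}$ containing the block $\frac{1}{\sqrt2}\bigl(\begin{smallmatrix}1&1\\ i&-i\end{smallmatrix}\bigr)$ gives $b=b\,\overline{(U_{n-1}\bar U_{n-1})_{jj}}$ with $(U_{n-1}\bar U_{n-1})_{jj}=\frac{1-i}{2}\neq 1$, so $b=0$ already follows from the unitaries you used in (ii)--(iii), had the off-diagonal ($V_1$--$V_4$) block been analyzed with the correct action.

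The fallback you propose does not repair this as written. The first-order identity from $R_\theta\oplus I_{n-2}$ is, with $X=E_{21}-E_{12}$, $[\,X\otimes I-I\otimes X,\;W_{1,1}\,]=W_{1,2}+W_{2,1}$, and the components one naturally matches do not yield $b=0$: for instance the $\langle 1,1|\cdot|1,2\rangle$ component gives $b=t-a$ (up to sign) and the $\langle 1,1|\cdot|2,1\rangle$ component gives $a+b-d=\overline{m_{11}}$; these couple $b$ to quantities ($t-a$, $d$, entries of $\mathbf{N}_{21}$) that are not yet known to vanish at this stage --- the equalities $t=a$, $d=c$ are only established later, in Theorem~\ref{1112=0} --- so the closing step is incomplete, and nothing in your sketch shows which components "force $b=0$". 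Note also that $n=2$ is genuinely exceptional even after correcting the Schur argument (there $U_{n-1}$ is a phase and imposes no constraint on $b$), so a non-block-diagonal substitution is unavoidable in that case; the paper handles all $n$ uniformly by inserting the phased swap $U=\bigl(\begin{smallmatrix}0&e^{i\theta_2}\\ e^{i\theta_1}&0\end{smallmatrix}\bigr)\oplus I_{n-2}$, which exchanges the outer blocks $W_{1,1}$ and $W_{2,2}$ and, by arbitrariness of $\theta_1,\theta_2$, annihilates the residual entries of $\mathbf{M}_{12}$ and $\mathbf{M}_{21}$ (and likewise all $\mathbf{M}_{ij}$, $i\neq j$), without ever invoking Lemma~\ref{simplify-1}. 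You need either the corrected fixed-vector argument (for $n\ge3$) together with a separate $n=2$ computation, or an argument along the paper's lines.
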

\begin{proof}
    Let \( U = \operatorname{diag}\{e^{i\theta_1}, \cdots, e^{i\theta_n}\} \). We have
    \( W_{r,r} = (U \otimes U^\dagger) W_{r,r} (U \otimes U^\dagger)^\dagger,r=1,\cdots,n\) We assume that \( W_{r,r} = \{ \mathbf{M}^{(r)}_{ij} \} \). By simplification, we have
    $$e^{(\theta_i - \theta_j)} U^\dagger \mathbf{M}^{(r)}_{ij} U = \mathbf{M}^{(r)}_{ij},i=1,\cdots,n,j=1,\cdots,n.$$ 
This means that when $i=j$, \( U^\dagger \mathbf{M}^{(r)}_{i,i} U = \mathbf{M}^{(r)}_{i,i} \), so \( \forall \mathbf{M}^{(r)}_{i,i} \) is a diagonal matrix. For \( \forall i \neq j \), let \( \mathbf{M}^{(r)}_{ij} = (m^{(r)}_{kl})_{k,l=1}^n \), then
    \[
    e^{(\theta_i - \theta_j)} e^{(\theta_l - \theta_k)} m^{(r)}_{kl} = m^{(r)}_{kl},k=1,\cdots,n,l=1,\cdots,n.
    \]
    so the nonzero element in \( \mathbf{M}^{(r)}_{ij} \) is the \( (i,j) \)-th element, $r=1,\cdots,n$ So we can assume that $\mathbf{M}^{(r)}_{ij}=k_{ij}^{(r)}E_{i,j},\forall i\neq j$. 
    
Now let $U=\begin{pmatrix}
           & e^{i\theta_2} &   &   & \\
        e^{i\theta_1}&    &   & & \\
          &   & 1&  & \\
          &   &   &  \ddots&\\
          &   &   &  &1
    \end{pmatrix}$, we have $(U\otimes U^\dagger)W_{22}(U\otimes U^\dagger)^\dagger=W_{11}$. Since \( W_{r,r} = \{ \mathbf{M}^{(r)}_{ij} \} \), we have 
$$\mathbf{M}_{12}^{(1)}=e^{i(\theta_2-\theta_1)}U^\dagger \mathbf{M}_{21}^{(2)}U,M_{21}^{(1)}=e^{i(\theta_2-\theta_1)}U^\dagger \mathbf{M}_{12}^{(2)}U.$$
After simplification,we have $e^{2i(\theta_2-\theta_1)}k_{21}^{(2)}=k_{12}^{(1)},e^{2i(\theta_1-\theta_2)}k_{12}^{(2)}=k_{21}^{(1)}$. Due to the arbitariness of $\theta_1,\theta_2$,$k_{21}^{(2)}=k_{12}^{(1)}=k_{12}^{(2)}=k_{21}^{(1)}=0,\mathbf{M}_{12}=\mathbf{M}_{21}=\mathbf{O}$. Similarily, we let $U=P_{i,j}(\theta_1,\theta_2)$, where $P_{i,j}(\theta_1,\theta_2)$ is obtained by multiplying the $i-th$ column of $P_{i,j}$ by $e^{i\theta_1}$
and multiplying the $j-th$ column by $e^{i\theta_2}(i\neq j)$. Using the same technique, we can get $\mathbf{M}_{ij}^{(1)}=\mathbf{M}_{ji}^{(1)}=\mathbf{O}$.By iterating over all \(i, j\), we can obtain \(\mathbf{M}_{k,l}=\mathbf{O}(k\neq l)\). On the other hand, we have $$P_{ij}\mathbf{M}_{1,1}P_{ij}=\mathbf{M}_{1,1}(i,j\geq 2).$$
So we can set $\mathbf{M}_{1,1}=diag\{a,t,\cdots,t\}$. Similarly we have $P_{ij}\mathbf{M}_{2,2}P_{ij}=\mathbf{M}_{2,2}(i,j\geq 3)$,So we can set $\mathbf{M}_{2,2}=diag\{d,e,c,\cdots,c\}$ Inductively, the (1,1)-th element of $\mathbf{M}_{i,i}$ is $d,$the (i,i)-th element is $e,$ and other diagonal elements are $c$. 

Now let $U=diag\{1,\begin{pmatrix}
        \frac{1}{\sqrt 2}& \frac{1}{\sqrt 2} \\
        \frac{1}{\sqrt 2}& -\frac{1}{\sqrt 2}\\
    \end{pmatrix},1,\cdots,1\}$,we have $(U\otimes U^\dagger)W_{11}(U\otimes U^\dagger)^\dagger=W_{11}$. Specifically, consider the (2,2) $n\times n$ matrix block, we have $\frac{1}{2}U^\dagger(\mathbf{M}_{22}+\mathbf{M}_{33})U=\mathbf{M}_{22}$, which indicates that $e=c$, $\mathbf{M}_{ii}=\mathbf{M}_{jj}, \forall i,j\geq 2$. Then $W_{1,1}$ can be simiplified as $diag\{\mathbf{M}_{1,1},\mathbf{M}_{2,2},\cdots,\mathbf{M}_{2,2}\}$.
\end{proof}
With the results of \( W_{1,1} \) and \( W_{1,2} \) obtained above, we further connect them through a special matrix, ultimately arriving at the following theorem.
\begin{theorem}
\label{1112=0}
    $W_{1,1}=\begin{pmatrix}
        x+y& &   \\
       & x & \\
        &  & \ddots\\
        &  & &x\\
    \end{pmatrix}\otimes I_n=\tilde{W}_{1,1},W_{1,2}=\begin{pmatrix}
        \mathbf{O}& \mathbf{O}&\cdots&\mathbf{O} \\
        yI_n& \mathbf{O}&\cdots&\mathbf{O}\\
        \vdots& \vdots & \ddots&\vdots\\
        \mathbf{O}& \mathbf{O}&\cdots&\mathbf{O}
    \end{pmatrix}=\tilde{W}_{1,2}.$
\end{theorem}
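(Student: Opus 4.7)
The plan is to further constrain the parameters left free by Lemmas \ref{3d-nondiag} and \ref{3d-diag}, and to check that they reproduce the two coefficients in $\tilde W=xI_{n^3}+yF_n\otimes I_n$. I would work in two stages: first collapse the nonzero sub-block $\mathbf{N}_{2,1}$ of $W_{1,2}$ to a scalar multiple $yI_n$; then apply a carefully chosen off-diagonal unitary to tie that scalar $y$ to the four parameters $a,t,d,c$ of $W_{1,1}$, forcing $a=t$, $d=c$, and $a-d=y$.

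For the first stage I would redo the arguments of Lemma \ref{3d-nondiag} on the remaining index groups. Substituting the diagonal unitary $U=\mathrm{diag}(1,1,1,e^{i\theta_4},\ldots,e^{i\theta_n})$ into \eqref{3d} and repeating the phase-matching analysis of \eqref{A10} on the $(1,2)$-outer block eliminates the off-diagonal entries of the lower-right $(n-3)\times(n-3)$ part of $\mathbf{N}_{2,1}$. Conjugating by the permutations $P_{k,\ell}$ with $k,\ell\geq 4$ then gives $P_{k,\ell}\mathbf{N}_{2,1}P_{k,\ell}=\mathbf{N}_{2,1}$, which identifies its diagonal entries $m_{44},\ldots,m_{nn}$. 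Applying $P_{3,k}$, $P_{2,k}$, $P_{1,k}$ in turn (for suitable $k$) and tracking the associated block-level relations via the recipe of Lemma \ref{le:} identifies $m_{11},m_{22},m_{33}$ with this common value. The outcome is $\mathbf{N}_{2,1}=yI_n$, i.e., $W_{1,2}=y\,E_{2,1}\otimes I_n=\tilde W_{1,2}$.

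For the second stage I would substitute the real rotation $U=R(\theta)\oplus I_{n-2}$ with $R(\theta)=\bigl(\begin{smallmatrix}\cos\theta&-\sin\theta\\ \sin\theta&\cos\theta\end{smallmatrix}\bigr)$. Because this $U$ mixes outer indices $1,2$ and also acts nontrivially on the first two inner basis vectors, the $(1,1)$-outer block of \eqref{3d} becomes
\[
(U\otimes U^\dagger)\bigl[\cos^2\theta\,W_{1,1}+\sin^2\theta\,W_{2,2}-\sin\theta\cos\theta\,(W_{1,2}+W_{2,1})\bigr](U\otimes U^\dagger)^{\dagger}=W_{1,1},
\]
where $W_{2,2}=(P_{1,2}\otimes P_{1,2})W_{1,1}(P_{1,2}\otimes P_{1,2})$ is already determined and $W_{1,2}=yE_{2,1}\otimes I_n$ is known from the first stage. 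Expanding both sides and matching the coefficients of $\cos^2\theta$, $\sin^2\theta$, and $\sin\theta\cos\theta$ in a few well-chosen sub-entries should yield the three scalar identities $a=t$, $d=c$, $a-d=y$. Setting $x:=d=c$ then gives $W_{1,1}=\tilde W_{1,1}$.

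The main obstacle I anticipate is the bookkeeping in this last stage: the rotation acts simultaneously on all three tensor factors (with the third conjugated), so the expansion produces many cross-terms, and one must carefully separate how $R(\theta)\otimes R(\theta)$ permutes outer blocks from how $R(\theta)\otimes R(\theta)^T$ acts within each block. A secondary concern is verifying that the permutation chain in the first stage really does identify all of $m_{11},\ldots,m_{nn}$ with a single common scalar, which requires carefully tracking how each block-level conjugation in Lemma \ref{le:} moves the diagonal entries of $\mathbf{N}_{2,1}$.
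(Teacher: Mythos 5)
The gap is in your first stage. There you allow yourself only diagonal phase matrices and transpositions $P_{k,\ell}$ chained through the block-level recipe of Lemma \ref{le:}. For a transposition $\sigma$ the substituted operator is $P_\sigma\otimes P_\sigma\otimes P_\sigma^\dagger=P_\sigma\otimes P_\sigma\otimes P_\sigma$, and if you compose several such block relations so that the outer block returns to position $(1,2)$, the net effect is a constraint $\mathbf{N}_{2,1}=Q\,\mathbf{N}_{2,1}\,Q^\dagger$ in which $Q$ is a permutation matrix that necessarily fixes the indices $1$ and $2$ (the outer relabelling is by the reversed product, which is exactly $Q^{-1}$, so if it fixes $1,2$ then so does $Q$). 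Such constraints, together with the diagonal-phase ones, can only give $m_{33}=\cdots=m_{nn}$; they can never identify $m_{11}$ or $m_{22}$ with that common value. A concrete witness: take $X=\sum_{a\neq b}E_{ab}\otimes E_{ba}\otimes E_{aa}$ and $W'=cI_{n^3}+X+X^\dagger$ with $c$ large. This $W'$ is Hermitian and positive semidefinite, it is invariant under conjugation by every $P_\sigma\otimes P_\sigma\otimes P_\sigma$ and every $D\otimes D\otimes D^\dagger$ with $D$ diagonal, and it is fully compatible with the block forms of Lemmas \ref{3d-nondiag} and \ref{3d-diag}; yet its sub-block is $\mathbf{N}_{2,1}=E_{11}+E_{22}$, not a multiple of $I_n$. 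Hence the conclusion $\mathbf{N}_{2,1}=yI_n$ of your first stage cannot follow from the substitutions you list: equating $m_{11},m_{22}$ with $m_{33}$ genuinely requires either a mixing (non-monomial) unitary, or a much more delicate use of non-involutive permutations (for a $3$-cycle the third slot carries $\sigma^{-1}$ while the first two carry $\sigma$, and only suitable products of such elements act on the third factor alone) --- neither of which appears in your outline. Since your second stage takes $W_{1,2}=yE_{2,1}\otimes I_n$ as an input, it inherits this gap.

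The repair is essentially what the paper does: a single $2\times 2$ mixing block (the paper uses $U=\operatorname{diag}\{\begin{pmatrix}\tfrac{1}{\sqrt2}&\tfrac{i}{\sqrt2}\\ -\tfrac{1}{\sqrt2}&\tfrac{i}{\sqrt2}\end{pmatrix},1,\ldots,1\}$) couples $\mathbf{M}_{11},\mathbf{M}_{22},\mathbf{N}_{21}$ and their $P_{12}$-conjugates in the four relations \eqref{eq:1}--\eqref{eq:4}, from which $t=a$, $d=c$ and $\mathbf{N}_{21}=(a-c)I_n$ all follow at once --- no separate ``stage 1'' is needed. Your rotation $U=R(\theta)\oplus I_{n-2}$ can play the same role, but you must run that computation with the \emph{unreduced} $\mathbf{N}_{2,1}$ of Lemma \ref{3d-nondiag} (diagonal $3\times 3$ corner plus a full lower block), not with the unproven form $yI_n$; note also that matching coefficients of $\cos^2\theta$, $\sin^2\theta$, $\sin\theta\cos\theta$ is too naive as stated, since the conjugation $(U\otimes U^\dagger)(\cdot)(U\otimes U^\dagger)^\dagger$ inside your displayed identity also depends on $\theta$, so the true dependence is quartic in $\cos\theta,\sin\theta$ (expand fully, or evaluate at a few fixed angles). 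With those changes your plan essentially reproduces the paper's proof of Theorem \ref{1112=0}.
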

\begin{proof}
By setting $U=P_{12}$, we have
$$(P_{12}\otimes P_{12})W_{2,2}(P_{12}\otimes P_{12})=W_{1,1},(P_{12}\otimes P_{12})W_{1,2}(P_{12}\otimes P_{12})=W_{2,1}.$$
We consider the matrices $\mathbf{N}_{21},\mathbf{M}_{11},\mathbf{M}_{22}$ is identical to the form in Lemmas \ref{3d-nondiag} and \ref{3d-diag}.
We denote $\mathbf{\tilde{M}}_{11}=diag\{t,a,t,\cdots,t\}=P_{12}\mathbf{M}_{11}P_{12}, \mathbf{\tilde{M}}_{22}=diag\{c,d,c,\cdots,c\}=P_{12}\mathbf{M}_{22}P_{12},\tilde{N}_{21}=P_{12}\mathbf{N}_{21}P_{12}$. We have
$$W_{2,1}=\begin{pmatrix}
        \mathbf{O}& \mathbf{\tilde{N}}_{21}&\cdots&\mathbf{O} \\
        \mathbf{O}& \mathbf{O}&\cdots&\mathbf{O}\\
        \vdots& \vdots & \ddots&\vdots\\
        \mathbf{O}& \mathbf{O}&\cdots&\mathbf{O}
    \end{pmatrix},W_{2,2}=\begin{pmatrix}
        \mathbf{\tilde{M}}_{22}& && &\\
        & \mathbf{\tilde{M}}_{11}&&&\\
        &  & \mathbf{\tilde{M}}_{22}&&\\
        & &&\ddots&\\
        & &&&\mathbf{\tilde{M}}_{22}\\
    \end{pmatrix}$$
Now let $U=diag\{\begin{pmatrix}
        \frac{1}{\sqrt 2}& \frac{i}{\sqrt 2} \\
        -\frac{1}{\sqrt 2}& \frac{i}{\sqrt 2}\\
    \end{pmatrix},1,\cdots,1\}$,we have the following matrix equation systems after simplification.
\begin{equation}
    U^\dagger(-\mathbf{M}_{11} + \mathbf{M}_{22} + \tilde{\mathbf{N}}_{21} - \mathbf{N}_{21} - \tilde{\mathbf{M}}_{22} + \tilde{\mathbf{M}}_{11})U = 0, 
    \label{eq:1} 
\end{equation}
\begin{equation}
    U^\dagger(-\mathbf{M}_{11} + \mathbf{M}_{22} - \tilde{\mathbf{N}}_{21} + \mathbf{N}_{21} - \tilde{\mathbf{M}}_{22} + \tilde{\mathbf{M}}_{11})U = 0 ,
    \label{eq:2} 
\end{equation}
\begin{equation}
    U^\dagger(\mathbf{M}_{11} + \mathbf{M}_{22} + \tilde{\mathbf{N}}_{21} + \mathbf{N}_{21} + \tilde{\mathbf{M}}_{22} + \tilde{\mathbf{M}}_{11})U = 4\mathbf{M}_{11} ,\label{eq:3} 
\end{equation}
\begin{equation}
    U^\dagger(\mathbf{M}_{11} + \mathbf{M}_{22} - \tilde{\mathbf{N}}_{21} - \mathbf{N}_{21} + \tilde{\mathbf{M}}_{22} + \tilde{\mathbf{M}}_{11})U = 4\mathbf{M}_{22} .\label{eq:4}
\end{equation}
Summing up equations \eqref{eq:3} and \eqref{eq:4}, we have $$U^\dagger(\mathbf{M}_{11} + \mathbf{M}_{22} + \tilde{\mathbf{M}}_{22} + \tilde{\mathbf{M}}_{11})U=2(\mathbf{M}_{11} + \mathbf{M}_{22}).$$ Simplifying gives \( a + d = t + c \). Adding equations \eqref{eq:1} and \eqref{eq:2} , we have $$U^\dagger(\tilde{\mathbf{M}}_{11}-\mathbf{M}_{11} + \mathbf{M}_{22}-\tilde{\mathbf{M}}_{22}    )U=\mathbf{O}.$$ 
Simplification gives \( t + d = a + c \). Solving these equations jointly, we obtain \( t = a \), \( d = c \), so \( \mathbf{M}_{11} = a I_n \), and \( \mathbf{M}_{22} = c I_n \). We now discuss \( \mathbf{N}_{21} \) and \( \tilde{\mathbf{N}}_{21} \).  \eqref{eq:1} - \eqref{eq:2} yields $U^\dagger(\mathbf{N}_{21}-\tilde{\mathbf{N}}_{21})U=0$, and simplification gives \( m_{11} = m_{22} \),thus $\mathbf{N}_{21}-\tilde{\mathbf{N}}_{21}$. Meanwhile, since \( \mathbf{M}_{11} = a I_n \), and \( \mathbf{M}_{22} = c I_n \), \eqref{eq:3} - \eqref{eq:4} gives 
$$U^\dagger\mathbf{N}_{21}U=(a-c)I_n.$$
simplifying gives \( m_{ii} = m_{jj} = a - c \) for all \( i, j \); \( m_{ij} = 0 \) for all \( i \neq j \), that is, \( \mathbf{N}_{2,1} = (a - c) I_n \). Let $y=a-c,x=c$,we have $W_{1,1}=diag\{(x+y)I_n,xI_n,\cdots,xI_n\}=\begin{pmatrix}
        x+y& &   \\
       & x & \\
        &  & \ddots\\
        &  & &x\\
    \end{pmatrix}\otimes I_n=\tilde{W}_{1,1},W_{1,2}=\begin{pmatrix}
        \mathbf{O}& \mathbf{O}&\cdots&\mathbf{O} \\
        yI_n& \mathbf{O}&\cdots&\mathbf{O}\\
        \vdots& \vdots & \ddots&\vdots\\
        \mathbf{O}& \mathbf{O}&\cdots&\mathbf{O}
    \end{pmatrix}=\tilde{W}_{1,2}$ According to Lemma \ref{centrosymmetric}, we obtain $W_{1,1}=\tilde{W}_{1,1},W_{1,2}=\tilde{W}_{1,2}$
\end{proof}
Combining Theorem \ref{1112} and Theorem \ref{1112=0},we obtain the following main theorem.
\begin{theorem}
    Any matrix $W$that commutes with $U\otimes U\otimes U^\dagger$ can be expressed as the following form, where $U$ represents arbitary unitary matrices.
\begin{equation}
    W=xI_{n^3} + yF_n \otimes I_n, \quad F_n = \sum_{i, j=1}^{n} (|i\rangle \langle j|) \otimes (|j\rangle \langle i|).
\label{final}
\end{equation}
\end{theorem}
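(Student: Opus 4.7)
The plan is to derive this final theorem as an essentially immediate corollary of the two preceding results, which together have already done all the substantive work. First I would dispatch the easy inclusion: by \eqref{3d_sol}, every matrix of the form $xI_{n^3}+yF_n\otimes I_n$ commutes with $U\otimes U\otimes U^\dagger$, since $I_n$ commutes with $U^\dagger$, $F_n$ commutes with $U\otimes U$ by Werner's theorem, and Lemma \ref{commute3} promotes these to a tensor-product commutation. So the containment $\{xI_{n^3}+yF_n\otimes I_n\}\subseteq\text{commutant}$ is free, and it remains to establish the converse.

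For the reverse direction I would take an arbitrary $W$ satisfying $(U\otimes U\otimes U^\dagger)W(U\otimes U\otimes U^\dagger)^\dagger=W$ for every unitary $U$, and partition it into $n\times n$ outer blocks $\{W_{i,j}\}_{i,j=1}^n$ with each block of size $n^2\times n^2$, exactly as in the partitioning set up before Theorem \ref{1112}. Applying Theorem \ref{1112=0}, the blocks $W_{1,1}$ and $W_{1,2}$ are forced into the canonical forms $\tilde W_{1,1}$ and $\tilde W_{1,2}$ of the ansatz $\tilde W:=xI_{n^3}+yF_n\otimes I_n$, with scalars $x,y$ uniquely determined from the diagonal and $(2,1)$-entry patterns of those two blocks.

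Finally I would invoke Theorem \ref{1112}, which asserts that once $W_{1,1}=\tilde W_{1,1}$ and $W_{1,2}=\tilde W_{1,2}$, the permutation-conjugation machinery (Lemma \ref{le:} applied with $P_{i,j}\otimes P_{i,j}$ in place of $P_{i,j}$, combined with Lemma \ref{centrosymmetric}) identifies every remaining block $W_{i,j}$ with $\tilde W_{i,j}$. Concatenating these block-identifications yields $W=\tilde W$, completing the characterization.

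I do not expect a genuine obstacle in this concluding step, because both the symmetry reduction of $W_{1,1},W_{1,2}$ and the propagation of that data through all other blocks have been discharged in the earlier two theorems. The only subtlety worth flagging is consistency of the scaling constant: Theorem \ref{1112} a priori permits $W=t\tilde W$ for an arbitrary $t\in\mathbb{C}$, whereas Theorem \ref{1112=0} pins $W_{1,1}$ down to $\tilde W_{1,1}$ with specific $x,y$. One must therefore verify that these two statements match at $t=1$ rather than leaving a spurious multiplicative freedom; this amounts to a single-entry comparison on the $(1,1)$-block, and is genuinely routine.
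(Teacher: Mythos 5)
Your proposal matches the paper's own argument: the paper proves this theorem exactly by citing the easy containment \eqref{3d_sol} and then combining Theorem \ref{1112} (propagation from the $(1,1)$ and $(1,2)$ blocks to all of $W$) with Theorem \ref{1112=0} (forcing those two blocks into the canonical forms $\tilde W_{1,1},\tilde W_{1,2}$). The scaling subtlety you flag is harmless, since the factor $t$ in Theorem \ref{1112} is simply absorbed into the free parameters $x,y$ of the family $\tilde W$, so no separate matching at $t=1$ is needed.
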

\section{Conclusion}
\label{sec:con}
In conclusion, this work establishes that for arbitrary unitary matrices $U$ and $V$, the only matrices commuting with $U \otimes U^{\dagger}$ and $U \otimes V$ are scalar multiples of the identity matrix. Furthermore, in the tripartite system case, any matrix commuting with $U \otimes U \otimes U^{\dagger}$ must necessarily adopt the form in equation \eqref{final}, where $F_n$ represents the canonical swap operator. This study provides a unified framework for analyzing high-dimensional matrix structures, which can facilitate advancements in quantum information theory and tensor decomposition algorithms. 

Future investigations may explore two promising directions: (1) the commutant structure of operators with hybrid tensor configurations like $U^{\otimes n-1} \otimes U^{\dagger}$, and (2) systematic classification of commuting operators under diverse combinations of $U$, $U^{\dagger}$, and their complex conjugations $U^*$in tripartite systems. Such extensions could deepen our understanding of operator algebras in multipartite quantum scenarios.

\section*{ACKNOWLEDGMENTS}
Authors were supported by the NNSF of China (Grant No. 12471427). 
\bibliography{ref}
\end{document}